\newcommand{\rr}{\mathbb{R}}
\newcommand{\cc}{\mathbb{C}}
\newcommand{\nn}{\mathbb{N}}
\newcommand{\rank}{\text{rank}}
\newcommand{\hilbert}{\textbf{H}}
\newcommand{\hilb}{\textbf{H}}
\newcommand{\bounded}{\mathcal{B}}
\newcommand{\linear}{\mathcal{L}}
\newcommand{\linearhs}{\mathcal{L}_{HS}}
\newcommand{\linearhsh}{\mathcal{L}_{HS}(\textbf{H})}
\newcommand{\linearhshh}{\mathcal{L}_{HS} (\textbf{H}_1, \textbf{H}_2)}
\newcommand{\norm}[1]{\left\lVert #1 \right\rVert}
\newcommand{\normhs}[1]{\left\lVert #1 \right\rVert_{HS}}
\newcommand{\limk}{\xrightarrow{k\to \infty}}
\newcommand{\limn}{\xrightarrow{n\to \infty}}
\newcommand{\bddhh}{\bounded(\hilbert_1, \hilbert_2)}
\newcommand{\bddh}{\bounded(\hilbert)}
\newcommand{\hilone}{\hilbert_1}
\newcommand{\hiltwo}{\hilbert_2}
\newcommand{\hili}{\hilbert_i}
\newcommand{\hilN}{\hilbert_N}
\newcommand{\tpN}{\hilbert_1\otimes\cdots\otimes\hilbert_N}
\newcommand{\tptwo}{\hilbert_1\otimes\hilbert_2}
\newcommand{\otimesdots}{\otimes\cdots\otimes}
\newcommand{\lr}[1]{\left( #1 \right)}
\newcommand{\indic}{\mathbbm{1}}
\newcommand{\superone}{^{(1)}}
\newcommand{\sumk}{\sum_{k=1}^\infty}
\newcommand{\ketpsi}{\ket{\psi}}
\newcommand{\ketphi}{\ket{\phi}}
\newcommand{\omegatilde}{\Tilde{\omega}}
\newcommand{\ketpsione}{\ket{\psi^{(1)}_{a_m}}}
\newcommand{\ketpsitwo}{\ket{\psi^{(2)}_{a_m}}}
\newcommand{\hermite}[1]{\mathcal{H}_{#1}}
\theoremstyle{plain}
\newtheorem{theom}{Theorem}[section]
\newtheorem{prop}[theom]{Proposition}
\newtheorem{lem}[theom]{Lemma}
\theoremstyle{definition}
\newtheorem{defin}[theom]{Definition}
\newtheorem{remark}[theom]{Remark}
\numberwithin{equation}{section}
\title{The Canonical Forms of Matrix Product States in Infinite-Dimensional Hilbert Spaces}
\author{ {
Niilo Heikkinen
} \\
\small{Department of Mathematics and Statistics, University of Helsinki, niilo.heikkinen@helsinki.fi} 
}
\begin{document}
\maketitle
\vspace{-6mm}
\section*{Abstract} 
In this work, we prove that any element in the tensor product of separable infinite-dimensional Hilbert spaces can be expressed as a matrix product state (MPS) of possibly infinite bond dimension.
The proof is based on the singular value decomposition of compact operators and the connection between tensor products and Hilbert-Schmidt operators via the Schmidt decomposition in infinite-dimensional separable Hilbert spaces. The construction of infinite-dimensional MPS (idMPS) is analogous to the well-known finite-dimensional construction in terms of singular value decompositions of matrices. The infinite matrices in idMPS give rise to operators acting on (possibly infinite-dimensional) auxiliary Hilbert spaces. As an example we explicitly construct an MPS representation for certain eigenstates of a chain of three coupled harmonic oscillators. 
\vspace{-2mm}
\tableofcontents

\section{Introduction}
In the last couple of decades, tensor networks have played a central role in the study of many-body quantum systems by providing efficient parametrizations of state vectors in a tensor product space by expressing higher rank tensors as "networks" of tensors of lower rank (for reviews see. eg. \cite{schollwock2011density}, \cite{bridgeman2017hand}, \cite{orus2014practical} and \cite{cirac2021matrix} or alternatively \cite{hackbusch2012tensor} for a more mathematical presentation). Matrix product states (MPS) are simple, but nontrivial tensor networks, and they are applicable in the study of realistic one-dimensional quantum systems. In particular, they are central in the celebrated density matrix renormalization group (DMRG) algorithm \cite{schollwock2011density} and they have been especially useful in studies of ground states of one-dimensional gapped Hamiltonians, which satisfy an area law for entanglement entropy (\cite{hastings2007area}, \cite{verstraete2004matrix}, \cite{vidal2004efficient}). Furthermore, MPS have also allowed for generalizations such as the so-called continuous MPS (cMPS), which arises as the continuous limit of MPS and is used in the study of one-dimensional quantum field theories (\cite{verstraete2010continuous}, \cite{haegeman2013calculus}, \cite{tilloy2019continuous}). Additionally, MPS have found several applications beyond these (\cite{ritter2024quantics}, \cite{fernandez2025learning}). In the numerical and mathematical communities, MPS are known as tensor trains (TT) \cite{oseledets2011tensor}. Furthermore, MPS/TT can be understood as a special case of the hierarchical tucker (HT) form of a tensor (\cite{hackbusch2012tensor}, Sec. 12.2).

 An MPS is a representation of a vector in a tensor product space where the expansion coefficients with respect to a basis are given as a certain product of matrices. Depending on the chosen boundary conditions, either the dimensions of the matrices are such that their product results in a scalar, or we take the trace of their product. The maximum dimension of the associated matrices is referred to as the bond dimension of the MPS.

It is a well known result that an arbitrary vector in an $N$-fold tensor product of finite-dimensional Hilbert spaces can be written as an MPS, and the decomposition can be obtained iteratively by repeated singular value or Schmidt decompositions (\cite{schollwock2011density}, \cite{vidal2003efficient}). In this paper we perform a straightforward generalization of this procedure for a vector in an infinite-dimensional separable Hilbert space. We will use 
a method that is analogous to the finite-dimensional construction and allows us to obtain an exact MPS representation of possibly infinite bond dimension. 

Related to tensor networks with infinite bond dimension, a recent study explored tensor renormalization group maps in the context of infinite-dimensional Hilbert spaces \cite{rychkov2022tensor}. Additionally, MPS-based methods have been previously used in the study of systems with a finite number of constituents, but each with continuous degrees of freedom, using finite-dimensional MPS as approximations for these systems \cite{iblisdir2007mpscont}. The result in this paper proves that these approximations converge to the original state vector in the norm of the tensor product Hilbert space. It should be noted that this discrete approximation has to be performed by truncating both the physical and the auxiliary (virtual) degrees of freedom separately, and that the matrix elements in the MPS do not always decay monotonically, as seen in Section 4 of this paper.

The MPS decomposition is not unique, and MPS possess a gauge degree of freedom. This gauge freedom allows for us to always write the MPS in any of the so-called canonical forms, which make certain computations of e.g. expected values and matrix elements straightforward \cite{schollwock2011density}. The canonical forms have natural generalizations in the infinite-dimensional context. Additionally, the infinite matrices in idMPS give rise to operators acting on auxiliary Hilbert spaces such that their product results in a scalar.

The paper is organized as follows. First, in section 2 we recall elementary results and state the relevant definitions. In Section 3 we state and prove our main result, mainly that an arbitrary vector in a tensor product of separable Hilbert spaces can be written as an idMPS in any of the canonical forms. In Sections 3.1--3.2 we construct each of the canonical forms by applying Schmidt decompositions in the infinite-dimensional context. In Section 3.3 we briefly discuss the operators that arise from idMPS. In Section 4, we construct an idMPS representation for certain eigenstates of a chain of three quantum harmonic oscillators. Finally, in Section 5 we provide conclusions and an outlook on possible future directions.

\section{Mathematical Background}
In this section we establish notation and cite known mathematical results and definitions that are used in proving the main results of this paper. 

We denote by $\hilb$ or $\hilb_n,$ where $ n\in\nn,$ a separable Hilbert space over the complex field, and $\hilb^*$ denotes the dual of $\hilb.$ We denote by $\bddhh$ and $\bddh$ the sets of bounded operators from $\hilone$ to $\hiltwo$ and bounded operators from $\hilb$ to $\hilb,$ respectively. By $\linearhshh$ and $\linearhsh$ we denote the sets of Hilbert-Schmidt operators from $\hilone$ to $\hiltwo$ and from $\hilb$ to $\hilb,$ respectively. We use Dirac's braket notation, and inner products are assumed to be linear in the second argument.
The tensor product of two kets $\ketpsi$ and $\ketphi$ is denoted by any of the following expressions: $\ketpsi\otimes\ketphi = \ketpsi\ketphi = \ket{\psi,\phi}.$

Our construction of idMPS relies on the existence of the Schmidt decomposition in general separable Hilbert spaces, proof of which can be found in the Appendix.

\begin{theom} [Schmidt Decomposition] \label{SD-Prop}
For any $\ketpsi\in \hilbert_1 \otimes \hilbert_2,$ there exist orthonormal sets $\{ \ket{e_k} \}_{k=1}^{N} \subseteq \hilbert_1$ and $\{ \ket{f_k} \}_{k=1}^{N} \subseteq \hilbert_2$  where $N\in\nn_0\cup\{\infty\},$ as well as nonnegative real numbers $\{ \lambda_k \}_{k=1}^{N},$ with $\lambda_k \limk 0$ (if $N$ is infinite), such that
\begin{equation} \label{SD}
\ketpsi = \sum_{k=1}^{N} \lambda_k \ket{e_k} \otimes \ket{f_k},
\end{equation}
with convergence in the norm of $\tptwo$. The numbers $\lambda_k$ are called \emph{Schmidt coefficients}, the vectors $\ket{e_k}$ and $\ket{f_k}$ the \emph{left and right Schmidt vectors}, respectively, and the expression (\ref{SD}) a \emph{Schmidt decomposition} (SD) of $\ketpsi.$
\end{theom}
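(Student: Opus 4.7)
The plan is to exploit the canonical isometric isomorphism between $\tptwo$ and the space $\linearhshh$ of Hilbert-Schmidt operators, and then to apply the singular value decomposition (SVD) of a compact operator. First I would define a map sending each simple tensor $\ket{u}\otimes\ket{v}\in\tptwo$ to the rank-one operator $\op{v}{u}\in\linearhshh$, extending by linearity. A direct computation gives $\normhs{\op{v}{u}} = \norm{u}\,\norm{v}$, matching the tensor product norm on simple tensors. By expanding an arbitrary finite linear combination of simple tensors in orthonormal bases of $\hilone$ and $\hiltwo$ and using the Parseval identity for Hilbert-Schmidt operators, one verifies that the extension is an isometry with dense image; completeness of both sides then promotes it to a unitary isomorphism.

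Second, given $\ketpsi\in\tptwo$, let $T_\psi\in\linearhshh$ be the associated Hilbert-Schmidt operator. Since every Hilbert-Schmidt operator is compact, the classical SVD for compact operators on separable Hilbert spaces supplies orthonormal systems $\{\ket{e_k}\}\subset\hilone$ and $\{\ket{f_k}\}\subset\hiltwo$ (the eigenvectors of $T_\psi^*T_\psi$ and $T_\psi T_\psi^*$) together with nonnegative reals $\lambda_1\geq\lambda_2\geq\cdots$ which either terminate or tend to $0$, such that
\[
T_\psi \;=\; \sum_{k=1}^{N} \lambda_k\,\op{f_k}{e_k},
\]
with initial convergence in operator norm. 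Because $T_\psi$ is Hilbert-Schmidt we additionally have $\sum_k \lambda_k^2 = \normhs{T_\psi}^2 < \infty$, which upgrades the convergence of the partial sums to convergence in the Hilbert-Schmidt norm.

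Third, I would pull this decomposition back through the isomorphism: each rank-one operator $\op{f_k}{e_k}$ corresponds to the simple tensor $\ket{e_k}\otimes\ket{f_k}$, and Hilbert-Schmidt convergence on the operator side translates directly into convergence in the norm of $\tptwo$ on the tensor side. This yields the representation (\ref{SD}) with the stated orthonormality and the decay $\lambda_k\to 0$ in the infinite case.

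The main obstacle I anticipate is not the SVD itself, which is standard, but the careful bookkeeping around antilinearity: because the bra $\bra{u}$ depends antilinearly on $\ket{u}$, the assignment $\ket{u}\otimes\ket{v}\mapsto\op{v}{u}$ cannot be linear in both slots simultaneously, so the isomorphism must be set up either via $\hilone^*$ or by fixing a conjugation on $\hilone$, while remaining consistent with the paper's convention that inner products are linear in the second argument. Once this identification is rigorously in place, the Schmidt decomposition in the infinite-dimensional setting is essentially a restatement of the compact-operator SVD transported across the isomorphism.
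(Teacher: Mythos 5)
Your proposal is correct and follows essentially the same route as the paper: identify $\ketpsi$ with a Hilbert--Schmidt operator via the (conjugate-linear) isometry $\ket{u}\otimes\ket{v}\mapsto\ket{v}\bra{u}$, apply the compact-operator SVD, upgrade to Hilbert--Schmidt-norm convergence using $\sum_k\lambda_k^2<\infty$, and pull back. The antilinearity issue you flag is exactly the point the paper addresses by routing the isomorphism through $\hilone^{\,*}$ (citing Berberian), so your plan matches the paper's proof in all essentials.
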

\begin{proof} See Appendix \ref{Appendix A}.
\end{proof}

Let us recall the definition of (finite-dimensional) matrix product states.
\begin{defin} [Matrix Product State]
Let $\hilone,\dots,\hilN$ be finite dimensional Hilbert spaces of dimensions $\dim(\hili)=d_i$ with orthonormal bases $\{ \ket{k_i} \}_{k_i=0}^{d_i-1}$ for each. 
A vector 
\begin{equation}
\ket{\psi} = \sum_{k_1=0}^{d_1-1}\cdots\sum_{k_N=0}^{d_N-1} c_{k_1,...,k_N} \ket{k_1,\dots,k_N} \in \hilone\otimes\cdots\otimes\hilN
\end{equation}
is called a \emph{matrix product state with open boundary conditions}, if the coefficients are written as
 \begin{equation} \label{mps-indices-c}
     c_{k_1,...,k_N} = M^{(k_1)} \Lambda^{(1)} M^{(k_2)} \Lambda^{(2)} \cdots \Lambda^{(N-1)} M^{(k_N)},
 \end{equation}
in terms of $dN$ complex matrices 
$\{ M^{(k_i)} \: | \: i=1,...,N, \: k_i=1,\dots,d_i \}$ 
and $N-1$ real diagonal
matrices 
$\{ \Lambda^{(i)} \: | \: i=1,...,N-1\}.$
 If the diagonal matrices are equal to the identity, they are not written explicitly and then simply
  \begin{equation}
     c_{k_1,...,k_N} = M^{(k_1)} M^{(k_2)} \cdots M^{(k_N)}.
 \end{equation}
A vector $\ketpsi$ is a \emph{matrix product state with periodic boundary conditions} if the coefficients are written in the form 
 \begin{equation} \label{mps-indices-c3}
     c_{k_1,...,k_N} = \text{tr}\big(M^{(k_1)} M^{(k_2)} \cdots M^{(k_N)}\big),
 \end{equation}
 where $\{ M^{(k_i)} \: | \: i=1,\dots,N, \: k_i=1,\dots,d_i \}$ are complex square matrices.
\end{defin}

\begin{remark} [Notation]
\begin{enumerate}
    \item In the above definition we used the notation that is standard in the literature, and did not write the site indices of the matrices $M^{(k_n)}$ explicitly. It would be more precise to write $M^{(n,k_n)}$ instead of $M^{(k_n)},$ as then the index $n$ would reveal which set of matrices is considered, and the index $k_n$ which matrix in particular. However, for our purposes the extra site index is somewhat redundant, and as long as we are not writing any explicit numerical values, we can identify the set of matrices from the index $k_n$. To make notation less cumbersome, we will stick to this convention.
    \item In (\ref{mps-indices-c})--(\ref{mps-indices-c3}) the indices appear as lower indices on the left hand side and as upper indices on the right hand side. This is nothing but a notational convention, and the positioning of indices as upper or lower ones does not make a difference in the context of this paper. It is somewhat common in the MPS literature to denote the physical indices $k_n$ as upper indices (see e.g. \cite{schollwock2011density}, \cite{cirac2021matrix}), reserving the lower indices for the virtual (bond) indices of the matrices, since this makes the matrix/operator product interpretation very clear.  In this paper we follow this convention. Additionally, the upper index of each of the diagonal matrices $\Lambda^{(n)}$ is used to differentiate between different matrices. Again, any lower indices denote the rows and columns of the matrix in question.
\end{enumerate}
\end{remark}

It is common to write tensor networks using \emph{tensor diagram notation,} in which tensors are represented by graphs where the nodes represent the tensors and the edges (sometimes called "legs") the indices, and connected indices are contracted (for more details on tensor diagram notation see e.g. \cite{biamonte2017tensor}).
In this notation, a general element $\ketpsi\in\tpN$ is represented by a single node with $N$ edges attached to it:
\begin{equation} 
    \includegraphics[]{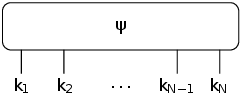}
\end{equation}
and an MPS with open boundary conditions is written as a graph of the form
\begin{equation} \label{canonical-mps-diagram1}
    \includegraphics[]{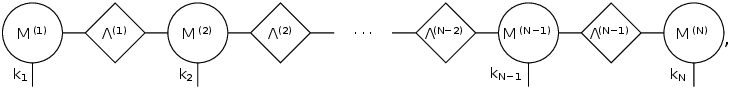}
\end{equation}
or with the diagonal matrices $\Lambda^{(n)}$ set to unity:
\begin{equation}
    \includegraphics[]{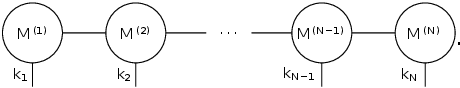}
\end{equation}
The corresponding MPS with periodic boundary conditions is written as the graph
\begin{equation}
    \includegraphics[]{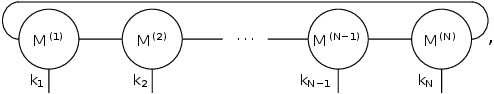}
\end{equation}
where the line connecting $M^{(1)}$ and $M^{(N)}$ corresponds to the trace operation. When writing MPS as diagrams, the notation is far less cluttered, and in the diagrammatic representation we write explicitly $M^{(n,k_n)}=M^{(k_n)}$ to emphasize that the matrices are in general different.

\begin{remark}
    In a matrix product state with open boundary conditions, the matrices $M^{(k_1)}$ and $M^{(k_N)}$ are actually row and column vectors, respectively, ensuring that the matrix product results in a scalar.
\end{remark}

In this paper we focus on MPS with open boundary conditions.
We conclude this section by defining the canonical forms of MPS in infinite-dimensional Hilbert spaces.

\begin{defin} [Infinite-Dimensional MPS] \label{idMPS}
Let $\hilone,\dots,\hilN$ be separable Hilbert spaces with orthonormal bases $\{ \ket{k_n} \}_{k_n=0}^{d_n-1}$ for each, and assume that at least one of $\hilb_n$ is infinite-dimensional (that is, $d_n=\infty$ for some $n=1,\dots,N$). A vector 
\begin{equation}
\ket{\psi} = \sum_{k_1=0}^{d_1-1}\cdots\sum_{k_N=0}^{d_N-1} c_{k_1,...,k_N} \ket{k_1,\dots,k_N} \in \hilone\otimes\cdots\otimes\hilN
\end{equation}
is called an \emph{infinite-dimensional matrix product state} (idMPS), if the coefficients are written in the form 
\begin{equation}
    c_{k_1,...,k_N} = \sum_{a_1=0}^\infty\cdots\sum_{a_N=0}^\infty M^{(k_1)}_{a_1} \Lambda^{(1)}_{a_1,a_1} M^{(k_2)}_{a_1,a_2} \Lambda^{(2)}_{a_2,a_2}\cdots M^{(k_{N-1})}_{a_{N-2},a_{N-1}} \Lambda^{(N-1)}_{a_{N-1},a_{N-1}} M^{(k_N)}_{a_{N-1}},
    \vspace{2mm}
\end{equation}
where for any values of $a_n$ we have $M^{(k_n)}_{a_{n-1},a_n}, \in \cc$ for $n\in\{2,\dots,N-1\}$ and $M^{(k_1)}_{a_1}\in\cc$ as well as  $M^{(k_N)}_{a_{N-1}}\in\cc.$ Also, we require $\Lambda^{(n)}_{a_n,a_n} \in \rr$ for any $n\in\{ 1,\dots,N-1 \}.$ If $ \Lambda^{(n)}_{a_n,a_n}=1$ for all values of $a_n,$ we do not write them explicitly, and write simply
\begin{equation}
    c_{k_1,...,k_N} = \sum_{a_1=0}^\infty\cdots\sum_{a_N=0}^\infty M^{(k_1)}_{a_1} M^{(k_2)}_{a_1,a_2} \cdots M^{(k_{N-1})}_{a_{N-2},a_{N-1}} M^{(k_N)}_{a_{N-1}}.
\end{equation}
\end{defin}

Let us recall the four different canonical forms of MPS as in \cite{schollwock2011density}. We state the definitions explicitly in terms of components of the matrices instead of referring to adjoints of the associated operators.
\begin{defin} [Left-Canonical MPS]\label{LC-MPS}
	A matrix product state
	\begin{equation}
		\ket{\psi} =
		\sum_{k_1=0}^{d_1-1} \cdots \sum_{k_N=0}^{d_N-1} 
		\sum_{a_1=0}^\infty\cdots\sum_{a_{N-1}=0}^\infty A^{(k_1)}_{a_1} A^{(k_2)}_{a_1,a_2} \cdots A^{(k_{N-1})}_{a_{N-2},a_{N-1}} A^{(k_N)}_{a_{N-1}}
  \ket{k_1, k_2, \dots, k_N}
	\end{equation}
	is called a \emph{left-canonical MPS} if all of the matrices are \emph{left-normalized,} i.e. if for every $n\in\{ 1, \dots, N \}$
	\begin{equation}
        \sum_{k_n}\sum_{a_{n-1}}  \left( A^{(k_n)} \right)^*_{a_{n-1},a_n} A^{(k_n)}_{a_{n-1},b_n} = \delta_{a_n b_n},
	\end{equation}
	where we set dummy indices $a_0=a_N=1$, $\delta_{a_n b_n}$ is the Kronecker delta and the asterisk denotes complex conjugation. We will use the letter $A$ to denote left normalized matrices.
\end{defin}

In tensor diagram notation, we denote a left-canonical MPS by a directed graph of the form
\begin{equation}
    \includegraphics[]{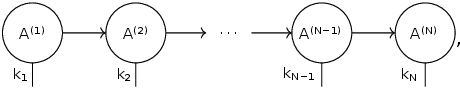}
\end{equation}
where the arrows denote the direction of normalization.

\begin{defin} [Right-Canonical MPS]\label{RC-MPS}
	A matrix product state
	\begin{equation}
		\ket{\psi} =
		\sum_{k_1=0}^{d_1-1} \cdots \sum_{k_N=0}^{d_N-1} 
		\sum_{a_1=0}^\infty\cdots\sum_{a_{N-1}=0}^\infty B^{(k_1)}_{a_1} B^{(k_2)}_{a_1,a_2} \cdots B^{(k_{N-1})}_{a_{N-2},a_{N-1}} B^{(k_N)}_{a_{N-1}}
  \ket{k_1, k_2, \dots, k_N}.
	\end{equation}
	is called a \emph{right-canonical MPS} if all of the matrices are \emph{right-normalized,} i.e. if for every $n\in\{ 1, \dots, N \}$
	\begin{equation}
        \sum_{k_n}\sum_{a_{n}} B^{(k_n)}_{a_{n-1},a_n} \left( B^{(k_n)}_{b_{n-1},a_n} \right)^*  = \delta_{a_{n-1} b_{n-1}},
	\end{equation}
	where we set dummy indices $a_0=a_N=1$, $\delta_{a_{n-1} b_{n-1}}$ is the Kronecker delta and the asterisk denotes complex conjugation. We will use the letter $B$ to denote right normalized matrices.
\end{defin}

A right-canonical MPS is denoted by a directed graph of the form
\begin{equation}
    \includegraphics[]{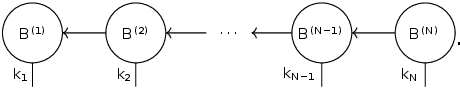}
\end{equation}

\begin{defin}[Mixed-Canonical MPS]\label{MC-MPS}
    Fix $n\in\{ 1, \dots, N-1 \}.$ A matrix product state of the form
    \begin{equation}
	\ket{\psi} \hspace{-0.5mm} = \hspace{-0.5mm}
	\sum_{k_1=0}^{d_1-1} \hspace{-0.5mm} \cdots \hspace{-0.5mm} \sum_{k_N=0}^{d_N-1} \sum_{a_1=0}^\infty \hspace{-0.5mm} \cdots\hspace{-0.5mm}\sum_{a_{N-1}=0}^\infty 
 A^{(k_1)}_{a_1} 
 \cdots A^{(k_n)}_{a_{n-1},a_{n}} D_{a_n,a_n}
 B^{(k_{n+1})}_{a_{n},a_{n+1}} \cdots 
 B^{(k_N)}_{a_{N-1}}
  \ket{k_1, k_2, \dots, k_N}
	\end{equation}
    is called a \emph{mixed-canonical MPS} if the matrices $\{A^{(k_1)}, \dots, A^{(k_n)}\}$ are left-normalized, the matrices $\{ B^{(k_{n+1})}, \dots, B^{(k_N)}\}$ are right-normalized and $D_{a_n,a_n}\geq 0$ (in particular $D_{a_n,a_n}\in\rr$) for every $a_n\in\nn_0.$
Defining for $n=0$ and $n=N$ the "matrix" $D$ as the scalar $D_{a_0,a_0}=1=D_{a_N,a_N}$ yields the left- and right-canonical forms as the $n=0$ and $n=N$ special cases of the mixed-canonical form, respectively.
\end{defin}

A mixed-canonical MPS is denoted by a directed graph of the form
\begin{equation}
    \includegraphics[]{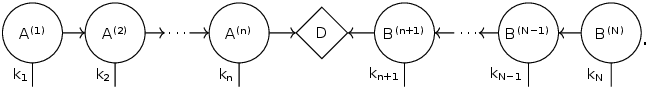}
\end{equation}

\begin{defin}[Canonical MPS]\label{C-MPS} 
Consider a matrix product state of the form
\begin{equation}
\ket{\psi} = \sum_{k_1=0}^{d_1-1}\cdots\sum_{k_N=0}^{d_N-1} c_{k_1,\dots,k_N} \ket{k_1,\dots,k_N} \in \tpN,
\end{equation}
where
\begin{equation}
    c_{k_1,...,k_N} = \sum_{a_1=0}^\infty\cdots\sum_{a_{N-1}=0}^\infty \Gamma^{(k_1)}_{a_1} \Lambda^{(1)}_{a_1,a_1} \Gamma^{(k_2)}_{a_1,a_2} \Lambda^{(2)}_{a_2,a_2}\cdots \Gamma^{(k_{N-1})}_{a_{N-2},a_{N-1}} \Lambda^{(N-1)}_{a_{N-1},a_{N-1}} \Gamma^{(k_N)}_{a_{N-1}}.
\end{equation}
The state $\ket{\psi}$ is called a \emph{canonical MPS,} if for any $n\in \{ 1,\dots,N-1 \}$ the expression
 \begin{equation}
	\sum_{a_n} \lambda^{(n)}_{a_n} \ket{\phi_{a_n}^{(1,...,n)} } \otimes \ket{\phi_{a_n}^{(n+1,...,N)} } ,
  \vspace{-5mm}
	\end{equation}

	where 
	\begin{align}
  \lambda^{(n)}_{a_n} &= \Lambda^{(n)}_{a_n,a_n}, \nonumber
  \\
\ket{\phi_{a_n}^{(1,...,n)} } 
&= \sum_{k_1} \cdots \sum_{k_n} \sum_{a_1}\cdots\sum_{a_{n-1}}
   \Gamma^{(k_1)}_{a_1} \Lambda^{(1)}_{a_1,a_1} 
  \cdots 
   \Lambda^{(n-1)}_{a_{n-1},a_{n-1}} \Gamma^{(k_n)}_{a_{n-1},a_n} \ket{k_1,\dots,k_n}, 
  \nonumber \\
\ket{\phi_{a_n}^{(n+1,...,N)} } \hspace{-0.5mm}
&= \hspace{-1mm} \sum_{k_{n+1}} \hspace{-1.0mm}\cdots\hspace{-1.0mm} \sum_{k_N} \sum_{a_{n+1}}\hspace{-1.0mm}\cdots\hspace{-1mm}\sum_{a_{N-1}}
   \Gamma^{(k_{n+1})}_{a_n,a_{n+1}} \Lambda^{(n+1)}_{a_{n+1},a_{n+1}} \cdots \Lambda^{(N-1)}_{a_{N-1},a_{N-1}} \Gamma^{(k_N)}_{a_{N-1}}\ket{k_{n+1},\dots,k_N},\nonumber
	\end{align}
	is a Schmidt decomposition of $\ket{\psi}$ with respect to the partition $(\hilone\otimesdots\hilb_n)\otimes\\(\hilb_{n+1}\otimesdots\hilN).$\footnote{By this we mean that we identify $\hilone\otimes\cdots\otimes\hilb_n\otimes\cdots\otimes\hilN$ with $(\hilone\otimes\cdots\hilb_n)\otimes(\hilb_{n+1}\cdots\otimes\hilN)$ and take the Schmidt decomposition of $\ket{\psi} \in (\hilone\otimes\cdots\hilb_n)\otimes(\hilb_{n+1}\cdots\otimes\hilN).$ In this way we can apply the Schmidt decomposition, which applies to two-fold tensor products, to N-fold tensor products. The validity of this operation is justified by the associativity of the tensor product of Hilbert spaces.}
\end{defin}
 In tensor diagram notation, a canonical MPS is written in the form (\ref{canonical-mps-diagram1}) with $M=\Gamma.$

\section{The Main Result}
Any state vector in a tensor product of separable Hilbert spaces can be written as an MPS in any of the canonical forms. This is the content of the following Theorem. The proof is the content of Sections 3.1 and 3.2, where we construct the canonical forms of an arbitrary state vector using a method analogous to the finite-dimensional case.

\begin{theom}
\label{idMPS existence}
Let $\hilone,\dots,\hilN$ be separable Hilbert spaces. Any
\begin{equation} \label{ketpsigeneral}
\ket{\psi} = \sum_{k_1=0}^\infty \cdots \sum_{k_N=0}^\infty c_{k_1,\dots,k_N} \ket{k_1,\dots,k_N} \in \hilone\otimes\cdots\otimes\hilN    
\end{equation}
can be written as a matrix product state in any of the canonical forms given in Definitions \ref{LC-MPS}, \ref{RC-MPS}, \ref{MC-MPS} and \ref{C-MPS}.
\end{theom}

\subsection{Construction of Left-, Right- and Mixed-Canonical idMPS}
\begin{proof} [Proof of Theorem \ref{idMPS existence} for the left-, right- and mixed-canonical forms]
Consider $\ketpsi$ given in \\(\ref{ketpsigeneral}).
Let us fix $m\in\{0,1,\dots,N\}$ and construct a  corresponding mixed-canonical MPS representation of $\ket{\psi}.$ 
As the very first thing we Schmidt decompose $\ketpsi$ with respect to the partition $(\hilone\otimesdots\hilb_m)\otimes(\hilb_{m+1}\otimesdots\hilN)$ as
\begin{equation} \label{mc-sd-m}
    \ketpsi = \sum_{a_m} \lambda_{a_m} \ketpsione\ketpsitwo,
\end{equation}
where $\ketpsione\in\hilone\otimesdots\hilb_m$ and $\ketpsitwo \in \hilb_{m+1}\otimesdots\hilN$ are orthonormal sets and $\lambda_{a_m}\geq 0.$ The series (\ref{mc-sd-m}) converges to $\ketpsi$ in the tensor product Hilbert space, and thus the coefficients form an $\ell^2$-sequence, i.e. $\sum_{a_m}\lambda_{a_m}^2 < \infty.$ 

If $m=0$ or $m=N,$ then (\ref{mc-sd-m}) is a trivial Schmidt decomposition,
and there is only one nonzero term in the series, which is equal to $\ketpsi.$ If $m=0,$ then $\ketpsi=\ketpsitwo$ and we skip part $ii.)$ of the proof below. Similarly, if $m=N,$ then $\ketpsi=\ketpsione$ and we skip part $i.)$ of the proof.

The proof will proceed in four stages. First we construct a right-canonical idMPS representation of $\ketpsitwo$ using a process called \emph{right leaf stripping,} and then we construct a left-canonical idMPS representation of $\ketpsione$ using \emph{left leaf stripping.} After this, we combine the expressions into a mixed canonical idMPS representation of $\ketpsi$ and conclude the proof by verifying the normalization conditions.
\\\\
$i.)$ \emph{Right leaf stripping.}
\\
We construct right-normalized, possibly infinite matrices $B^{(k_{m+1})},\dots,B^{(k_N)}$ by iteratively applying Schmidt decompositions to the vector $\ketpsitwo\in\hilb_{m+1}\otimesdots\hilN$ with respect to the partitions $\lr{\hilb_{m+1}\otimesdots\hilb_{N-1}}\otimes\lr{\hilN},\dots,\lr{\hilb_{m+1}}\otimes\lr{\hilb_{m+2}\otimesdots\hilN}$ according to the steps outlined below.
\\
$\bm{1.)}$\\ 
    We Schmidt decompose $\ketpsitwo$ with respect to the partition $(\hilb_{m+1} \otimes\cdots\otimes \hilb_{N-1})\otimes(\hilb_N)$ as
    \begin{equation} \label{RC-SD-1-}
        \ketpsitwo = \sum_{a_{N-1}} \lambda_{a_m,a_{N-1}} \ket{x_{a_{N-1}}^{(m+1,\dots,N-1)}}\ket{y_{a_{N-1}}^{(N)}}.
    \end{equation}
    The series (\ref{RC-SD-1-}) converges in the tensor product Hilbert space and the dependence on $a_m$ is encoded in the Schmidt coefficients. If $m=0,$ then we can set $a_m=1$ or omit it altogether.

    We expand the right Schmidt vectors $\ket{y_{a_{N-1}}^{(N)}}\in\hilN$ 
    in the basis $\{ \ket{k_N} \}$ as 
    \begin{equation}\label{rc_y1-}
        \ket{y_{a_{N-1}}^{(N)}} = \sum_{k_N} B_{a_{N-1}}^{(k_N)} \ket{k_N}
    \end{equation}
    and apply multilinearity and separate continuity of the tensor product to obtain
    \begin{align}
        \ketpsitwo &= \sum_{a_{N-1}} \sum_{k_N} \lambda_{a_m,a_{N-1}} 
        B_{a_{N-1}}^{(k_N)} \ket{x^{(m+1,\dots,N-1)}_{a_{N-1}}}\ket{k_N} \\
        &= \sum_{k_N} \sum_{a_{N-1}} \lambda_{a_m,a_{N-1}} B_{a_{N-1}}^{(k_N)} \ket{x^{(m+1,\dots,N-1)}_{a_{N-1}}}\ket{k_N}.
    \end{align}
    Note that the $k_i$-sums are associated with a unitary change of basis and can therefore be exchanged as above without affecting convergence. From now on we will implicitly use this fact whenever rearranging summations.
    \\
$\bm{2.)}$\\
We take a Schmidt decomposition of $\ketpsitwo$ with respect to the partition $(\hilb_{m+1} \otimes\cdots\otimes \hilb_{N-2})\otimes(\hilb_{N-1}\otimes\hilb_N)$, given by the formula
    \begin{equation} \label{RC-SD-2-}
        \ketpsitwo = \sum_{a_{N-2}} \lambda_{a_m,a_{N-2}} \ket{x_{a_{N-2}}^{(m+1,\dots,N-2)}}\ket{y_{a_{N-2}}^{(N-1,N)}},
    \end{equation}
    where $\{\ket{x_{a_{N-2}}^{(m+1,\dots,N-2)}}\}\subseteq \hilb_{m+1}\otimes\cdots\otimes\hilb_{N-2},$ $\{\ket{y_{a_{N-2}}^{(N-1,N)}}\} \subseteq \hilb_{N-1}\otimes\hilN$
    and the series (\ref{RC-SD-2-}) converges to $\ketpsitwo$ similarly as the series (\ref{RC-SD-1-}).
    The orthonormal set $\{\ket{y_{a_{N-1}}^{(N)}}\} \subseteq \hilN$ obtained in step 1 can be extended to an orthonormal basis of $\hilN,$ which we denote simply by $\{\ket{y_{a_{N-1}}^{(N)}}\}.$ Hence we can expand $\ket{y_{a_{N-2}}^{(N-1,N)}}$
    in the basis $\{ \ket{k_{N-1}}\otimes\ket{y_{a_{N-1}}^{(N)}} \},$ and we denote the coefficients by $B_{a_{N-2},a_{N-1}}^{(k_{N-1})}.$
    Additionally, the basis vectors $\ket{y_{a_{N-1}}^{(N)}}$ can be written in the form (\ref{rc_y1-}). We obtain the following expression:
    \begin{align}
        \ketpsitwo 
        &= \sum_{k_{N-1}}\sum_{k_{N}} \sum_{a_{N-2}}\sum_{a_{N-1}} \lambda_{a_m,a_{N-2}}   B_{a_{N-2},a_{N-1}}^{(k_{N-1})} B_{a_{N-1}}^{(k_N)} \ket{x_{a_{N-2}}^{(m+1,\dots,N-2)}} \ket{k_{N-1},k_N}.
    \end{align}
In particular, we obtained the following expression for the right Schmidt vectors:
    \begin{equation} \label{rc_y2-}
        \ket{y_{a_{N-2}}^{(N-1,N)}} = \sum_{k_{N-1}}\sum_{k_N} \sum_{a_{N-1}} B_{a_{N-2},a_{N-1}}^{(k_{N-1})} B_{a_{N-1}}^{(k_N)} \ket{k_{N-1}k_N}.
    \end{equation}
$\bm{n=3,...,N-m-1.)}$\\
We proceed as in step 2, by taking a Schmidt decomposition of $\ketpsitwo$ with respect to the partition $(\hilb_{m+1} \otimes\cdots\otimes \hilb_{N-n})\otimes(\hilb_{N-n+1}\otimes\cdots\otimes\hilb_N)$ and expanding the right Schmidt vectors $\ket{y_{a_{N-n}}^{(N-n+1,\dots,N)}}$ 
in the basis $\{\ket{k_{N-n+1}}\otimes\ket{y_{a_{N-n+1}}^{(N-n+2,\dots,N)} }\} \subseteq \hilb_{N-n+1}\otimes\cdots\otimes\hilN$
to obtain an expression of the form
\begin{equation}
        \ketpsitwo
        = \sum_{a_{N-n}} 
        \sum_{k_{N-n+1}} \sum_{a_{N-n+1}} \hspace{-1mm} \lambda_{a_m,a_{N-n}} B_{a_{N-n},a_{N-n+1}}^{(k_{N-n+1})}
        \ket{x_{a_{N-n}}^{(m+1,...,N-n)}} \ket{k_{N-n+1}}
        \ket{y_{a_{N-n+1}}^{(N-n+2,\dots,N)}},
\end{equation}
where $B_{a_{N-n},a_{N-n+1}}^{(k_{N-n+1})}$ are the associated expansion coefficients of $\ket{y_{a_{N-n}}^{(N-n+1,\dots,N)}}.$

Writing $\ket{y_{a_{N-n+1}}^{(N-n+2,\dots,N)}}$ in the basis $\{\ket{k_{N-n+2},\dots,k_N}\}$ in the form obtained in the previous step (in the form (\ref{rc_y2-})
\footnote{For general $n$, \[ \ket{y_{a_{N-n}}^{(N-n+1,\dots,N)}} = \sum_{k_{N-n+1}}\cdots\sum_{k_N} \sum_{a_{N-1}} \cdots\sum_{a_{N-n+1}}  B_{a_{N-n},a_{N-n+1}}^{(k_{N-n+1})} B_{a_{N-n+1},a_{N-n+2}}^{(k_{N-n})} \cdots B_{a_{N-1}}^{(k_N)} \ket{k_{N-n+1},\dots,k_N}\] })
as well as reordering the sums yields
\begin{multline}
       \ketpsitwo =  \sum_{k_{N-n+1}} \cdots \sum_{k_{N}} \sum_{a_{N-n}} 
       \cdots \sum_{a_{N-1}}
       \lambda_{a_m,a_{N-n}} B_{a_{N-n},a_{N-n+1}}^{(k_{N-n+1})} \cdots \\B_{a_{N-2},a_{N-1}}^{(k_{N-1})} B_{a_{N-1}}^{(k_N)} 
       \ket{x_{a_{N-n}}^{(m+1,...,N-n)}}\ket{k_{N-n+1},\dots,k_N}.
\end{multline}
In particular, we obtained the following expression for the right Schmidt vectors:
    \begin{equation} \label{rc_yn}
        \ket{y_{a_{N-n}}^{(N-n+1,\dots,N)}} = \sum_{k_{N-n+1}} \cdots\sum_{k_N} \sum_{a_{N-n+1}} \cdots \sum_{a_{N-1}}B_{a_{N-n},a_{N-n+1}}^{(k_{N-n+1})} \cdots 
        B_{a_{N-1}}^{(k_N)} \ket{k_{N-n+1},\dots,k_N}.
    \end{equation}
\bm{$N-m.)$}\\
At the start of step $N-m$ we have
\begin{equation}
    \ketpsitwo = 
    \sum_{k_{m+2}} \cdots \sum_{k_{N}} \sum_{a_{m+1}} 
       \cdots \sum_{a_{N-1}}
       \lambda_{a_m,a_{m+1}} B_{a_{m+1},a_{m+2}}^{(k_{m+2})} \cdots
       B_{a_{N-1}}^{(k_N)} 
       \ket{x_{a_{m+1}}^{(m+1)}}\ket{k_{m+2},\dots,k_N}.
\end{equation}
We now expand $\ket{x_{a_{m+1}}^{(m+1)}}\in \hilb_{m+1}$ as $\sum_{k_{m+1}}x^{(k_{m+1})}_{a_{m+1}} \ket{k_{m+1}}$ to obtain
\begin{equation}
    \ketpsitwo = 
    \sum_{k_{m+1}} \cdots \sum_{k_{N}} \sum_{a_{m+1}} 
       \cdots \sum_{a_{N-1}}
       \lambda_{a_m,a_{m+1}} x^{(k_{m+1})}_{a_{m+1}} B_{a_{m+1},a_{m+2}}^{(k_{m+2})} \cdots 
       B_{a_{N-1}}^{(k_N)} 
       \ket{k_{m+1},\dots,k_N}.
\end{equation}
Defining $B^{(k_{m+1})}_{a_m,a_{m+1}} := \lambda_{a_m,a_{m+1}} x^{(k_{m+1})}_{a_{m+1}}$ yields the desired representation
\begin{equation} \label{mc-psitwo-}
    \ketpsitwo = 
    \sum_{k_{m+1}} \cdots \sum_{k_{N}} \sum_{a_{m+1}} 
       \cdots \sum_{a_{N-1}}
       B^{(k_{m+1})}_{a_m,a_{m+1}} B_{a_{m+1},a_{m+2}}^{(k_{m+2})} \cdots \\B_{a_{N-2},a_{N-1}}^{(k_{N-1})} B_{a_{N-1}}^{(k_N)} 
       \ket{k_{m+1},\dots,k_N},
\end{equation}
which converges to $\ketpsitwo$ in the norm of $\hilb_{m+1}\otimesdots\hilN.$
\\
\\
$ii.)$ \emph{Left leaf stripping.}
\\
Now we consider the vector $ \ket{\psi^{(1)}_{a_m}}$ in (\ref{mc-sd-m}) and decompose it in terms of left-normalized matrices. The left leaf stripping argument is similar to the right leaf stripping above, with the difference being in the direction of the procedure, going over the partitions $\lr{\hilone}\otimes\lr{\hiltwo\otimesdots\hilb_m},$
$\dots,\lr{\hilone\otimesdots\hilb_{m-1}}\otimes\lr{\hilb_m},$  and instead of the right Schmidt vectors, we now expand the left Schmidt vectors at each step to obtain the MPS matrix elements. The steps of the process are outlined below.
\\
\textbf{1.)}\\
We write a Schmidt decomposition of $\ketpsione$ with respect to the partition $(\hilone)\otimes(\hiltwo\otimes\cdots\otimes\hilb_m)$ and expand the left Schmidt vectors in the basis $\{\ket{k_1}\}$ as
\begin{equation}\label{lc_y1}
    \ket{x_{a_1}^{(1)}} = \sum_{k_1} A_{a_1}^{(k_1)}\ket{k_1}
\end{equation}
to obtain
\begin{align}
    \ketpsione 
    &= \sum_{k_1} 
    \sum_{a_1} A_{a_1}^{(k_1)} \lambda_{a_1,a_m} \ket{k_1} \ket{y_{a_1}^{(2,\dots, m)}}. \label{lc1}
\end{align}
The $a_m$-dependence in $\ketpsione$ is encoded in the Schmidt coefficients, and if $m=N,$ we can again set $a_m=1$ or omit it.
\\
\bm{$n=2,...,m-1.)$}\\ 
We Schmidt decompose $\ketpsione$ with respect to the partition $(\hilb_{1}\otimes\cdots\otimes\hilb_n)\otimes(\hilb_{n+1} \otimes\cdots\otimes \hilb_{m}),$ 
expand the left Schmidt vectors $\ket{x^{(1,\dots,n)}_{a_n}}$ in the basis $\{\ket{x_{a_{n-1}}^{(1,\dots,n-1)}}\ket{k_n}\}$ with coefficients denoted by $A_{a_{n-1},a_n}^{(k_n)}$ and 
write the vector $\ket{x_{a_{n-1}}^{(1,\dots,n-1)}}$ in the form obtained in the previous step\footnote{ For $n=2$ this is given by (\ref{lc_y1}), and for general $n$ we have\\ $\ket{x_{a_{n}}^{(1,\dots,n)}} = \sum_{k_{1}}\cdots\sum_{k_n} \sum_{a_1} \cdots\sum_{a_{n-1}}  A^{(k_1)}_{a_1} A^{(k_2)}_{a_{1},a_2} \cdots A^{(k_n)}_{a_{n-1},a_n} \ket{k_1,\dots,k_{n}}$
}
to obtain 
\begin{equation} \label{lc_n}
        \ketpsione
        = \sum_{k_1}\cdots\sum_{k_n} \sum_{a_1} \cdots \sum_{a_n} A_{a_1}^{(k_1)} A_{a_1,a_2}^{(k_2)} \cdots A_{a_{n-1},a_{n}}^{(k_n)}
        \lambda_{a_{n},a_m} 
        \ket{k_{1},\dots,k_{n}} \ket{y_{a_{n}}^{(n+1,\dots, m)}}.
\end{equation}
In particular, we obtained the following expression for the left Schmidt vectors:
\begin{equation} \label{lc_yn}
    \ket{x_{a_{n}}^{(1,\dots,n)}} = \sum_{k_{1}}\cdots\sum_{k_n} \sum_{a_1} \cdots\sum_{a_{n-1}}  A^{(k_1)}_{a_1} A^{(k_2)}_{a_{1},a_2} \cdots A^{(k_n)}_{a_{n-1},a_n} \ket{k_1,\dots,k_{n}}.
\end{equation}
$\bm{m.)}$\\
At the start of step $m$ we have 
\begin{equation}
	\ketpsione = \sum_{k_1} \cdots \sum_{k_{m-1}} 
	\sum_{a_1} \cdots \sum_{a_{m-1}}
	A_{a_1}^{(k_1)} A_{a_{1},a_{2}}^{(k_{2})} \cdots A_{a_{m-2}a_{m-1}}^{(k_{m-1})}
    \lambda_{a_{m-1},a_m} 
    \ket{k_1,\dots,k_{m-1}} \ket{ y_{a_{m-1}}^{(m)}}.
\end{equation}
Expanding $\ket{ y_{a_{m-1}}^{(m)}}$ 
as $\sum_{k_m}y_{a_{m-1}}^{(k_m)}\ket{k_N}$
and defining $A_{a_{m-1},a_m}^{(k_{m})} := \lambda_{a_{m-1},a_m}y_{a_{m-1}}^{(k_{m})}$ yields the desired norm-convergent representation
\begin{equation}\label{mc_psi1-}
	\ketpsione = \sum_{k_1=0}^\infty \cdots \sum_{k_m=0}^\infty 
	\sum_{a_1=0}^\infty \cdots \sum_{a_{m-1}=0}^\infty
	A_{a_1}^{(k_1)} A_{a_{1},a_{2}}^{(k_{2})} \cdots A_{a_{m-2}a_{m-1}}^{(k_{m-1})}A_{a_{m-1},a_m}^{(k_{m})}
	\ket{k_1,\dots,k_m}.
\end{equation}
$iii.)$ \emph{Combining the expressions.}\\
We combine (\ref{mc-sd-m}), (\ref{mc-psitwo-}) and (\ref{mc_psi1-}) and apply multilinearity and separate continuity of the tensor product to obtain the idMPS
\begin{equation} \label{mc_mps_final-}
    \ketpsi = 
    \sum_{k_1=0}^\infty \hspace{-1mm}\cdots \hspace{-1mm} \sum_{k_N=0}^\infty \sum_{a_1=0}^\infty \cdots \hspace{-2mm}\sum_{a_{N-1}=0}^\infty
    A^{(k_1)}_{a_1} \dots A^{(k_m)}_{a_{m-1},a_m}  D_{a_m,a_m}B_{a_m,a_{m+1}}^{(k_{m+1})} \cdots B_{a_{N-1}}^{(k_N)} \ket{k_1,\dots,k_N},
\end{equation}
where we defined a diagonal matrix $D_{a_m,a_m}:=\lambda_{a_m}$ containing the Schmidt coefficients of the partition $\lr{\hilone\otimesdots\hilb_m}\otimes\lr{\hilb_{m+1}\otimesdots\hilN}$ and reordered the sums. The series (\ref{mc_mps_final-}) converges to $\ketpsi$ in the norm of $\tpN.$ 
Thus the coefficients of $\ketpsi$ are represented by the series
\begin{equation}\label{RC-MPS-coeffs}
    c_{k_1,\dots,k_N} =\sum_{a_1=0}^\infty \cdots \hspace{-2mm}\sum_{a_{N-1}=0}^\infty
    A^{(k_1)}_{a_1} \dots A^{(k_m)}_{a_{m-1},a_m}  D_{a_m,a_m}B_{a_m,a_{m+1}}^{(k_{m+1})} \cdots B_{a_{N-1}}^{(k_N)}
\end{equation}
with convergence in $\cc.$
If $m=0,$ we obtain an MPS with only $B$-matrices exactly as in Definition \ref{RC-MPS}, that is,
\begin{equation}\label{RC-MPS-final}
    \ket{\psi} = \sum_{k_1=0}^\infty \cdots \sum_{k_N=0}^\infty 
    \sum_{a_1=0}^\infty \cdots \sum_{a_{N-1}=0}^\infty
    B_{a_{1}}^{(k_1)} B_{a_{1},a_{2}}^{(k_{2})} \cdots B_{a_{N-2},a_{N-1}}^{(k_{N-1})} B_{a_{N-1}}^{(k_N)} \ket{k_1,\dots,k_N}.
\end{equation}
Similarly, if $m=N,$ we obtain an MPS with only $A$-matrices as in Definition \ref{LC-MPS}:
\begin{equation}\label{LC-MPS-final}
	\ket{\psi} = \sum_{k_1=0}^\infty \cdots \sum_{k_N=0}^\infty 
	\sum_{a_1=0}^\infty \cdots \sum_{a_{N-1}=0}^\infty
	A_{a_1}^{(k_1)} A_{a_{1},a_{2}}^{(k_{2})} \cdots A_{a_{N-2}a_{N-1}}^{(k_{N-1})}A_{a_{N-1}}^{(k_{N})}
	\ket{k_1,\dots,k_N}.
\end{equation}
\emph{iv.) Normalization of the Matrices.}\\
To conclude the proof, we demonstrate that matrices $A^{(k_1)},\dots,A^{(k_N)}$ and $B^{(k_{1})},\dots,B^{(k_N)}$ constructed with the above methods are left- and right-normalized, respectively.

On sites $2$ to $N$, the right normalization formula reduces to an orthonormal inner product,
as seen by the following
(we introduce a dummy column index 1 to $B^{(k_N)}_{a_{N-1}} =: B^{(k_N)}_{a_{N-1},1}$):
\begin{align}
\sum_{k_n}\sum_{a_{n}} B^{(k_n)}_{a_{n-1},a_n} \left( B^{(k_n)}_{b_{n-1},a_n} \right)^*
= \braket{y_{b_{n-1}}^{(n,\dots,N)} | y_{a_{n-1}}^{(n,\dots,N)}} 
= \delta_{b_{n-1}a_{n-1}}. 
\end{align}
The condition also holds for the matrices 
$B_{a_{1}}^{(k_1)}$ 
as long as $\ketpsi$ is normalized, as demonstrated by the following calculation:
\begin{align}
    \sum_{k_1}\sum_{a_{1}} B^{(k_1)}_{a_{1}} \left( B^{(k_1)}_{a_{1}} \right)^*
    = \sum_{k_1}\sum_{a_{1}}  \lambda_{a_1}^2 x^{(k_1)}_{a_1}\left(x^{(k_1)}_{a_1}\right)^* 
    = \sum_{a_{1}}  \lambda_{a_1}^2 \braket{x_{a_1}^{(1)} | x_{a_1}^{(1)}} 
    = \sum_{a_{1}}  \lambda_{a_1}^2 
    = \norm{ \ketpsi }^2.
\end{align}

The argument for the left-normalization of the $A$-matrices is similar: on sites 1 to $N-1$ (introducing a dummy row index $a_0=1$)
the left-normalization condition reduces to the orthonormal inner product $\braket{x_{a_n}^{(1,\dots,n)} | x_{b_n}^{(1,\dots,n)}}.$ 
The condition also holds for the matrices $A_{a_{N-1}}^{(k_N)}$ 
as long as $\ketpsi$ is normalized, which can be deduced similarly as in the right-normalized case by using orthonormality of the left Schmidt vectors $\ket{y_{a_{N-1}}^{(N)}}$. 
\end{proof}

We can write the above construction more elegantly using tensor diagram notation.
We start with a general $N$-component tensor $\ketpsi\in\tpN$ and write its Schmidt decomposition with respect to the partition $(\hilone\otimesdots\hilb_m)\otimes(\hilb_{m+1}\otimesdots\hilN)$ as
\begin{equation} \label{sd-m-graph}
    \includegraphics[]{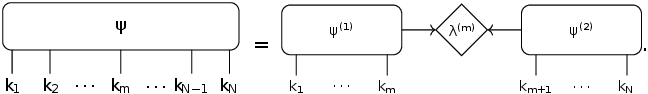}
\end{equation}
We apply right leaf stripping to $\ketpsitwo$ to write it as the directed graph
\begin{equation} \label{right-stripped-graph}
    \includegraphics[]{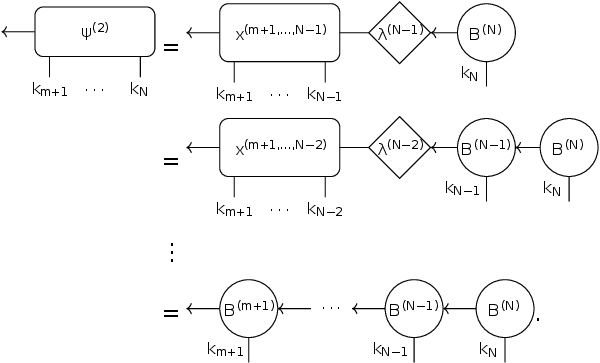}
\end{equation}
Similarly, we can apply left leaf stripping to write $\ketpsione$ as the directed graph
\begin{equation} \label{left-stripped-graph}
    \includegraphics[]{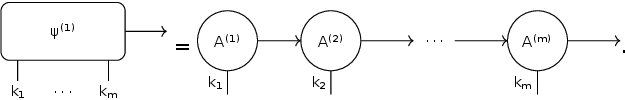}
\end{equation}
Combining (\ref{sd-m-graph}), (\ref{right-stripped-graph}) and (\ref{left-stripped-graph}) and denoting $\lambda^{(m)}=D$ yields the directed graph representing the MPS (\ref{mc_mps_final-})
\begin{equation}
    \includegraphics[]{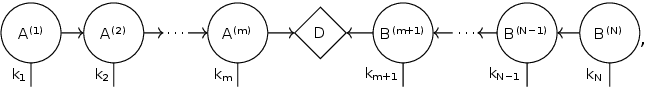}
\end{equation}
where the connected edges correspond to the (possibly infinite) summations over $a_1,\dots,a_{N-1},$ and the arrows denote the direction of the normalization of the matrices.

\subsection{Construction of Canonical idMPS}
\begin{proof}[Proof of Theorem \ref{idMPS existence} for the canonical form]
The construction of the canonical idMPS is also based on repeated Schmidt decompositions, and is analogous to the construction in \cite{vidal2003efficient}. The method is similar in spirit to the left leaf stripping method\footnote{We could equally well use a method analogous to right leaf stripping with the same basic ideas.} introduced in the mixed canonical construction in Section 3.1, with differences in details that let us leave the Schmidt coefficients explicitly visible in the final MPS.
Let us outline the construction for a general element $\ketpsi\in\tpN$ of the form (\ref{ketpsigeneral}). \\
$\bm{1.)}$\\
    We first Schmidt decompose $\ket{\psi}$ with respect to the partition $\hilone \otimes \big( \hiltwo \otimes \cdots \otimes \hilN \big)$
    and expand the left Schmidt vectors $\ket{x\superone_{a_1}} \in \hilone$ in the basis $\{ \ket{k_1} \}$ with coefficients $\Gamma_{a_1}^{(k_1)} \in \cc$ 
    to obtain
    \begin{equation}\label{c3}
        \ket{\psi} = \sum_{k_1=0}^\infty \sum_{a_1=0}^\infty \Gamma_{a_1}^{(k_1)} \lambda^{(1)}_{a_1}\ket{k_1} \ket{y^{(2,...,N)}_{a_1}},
    \end{equation}
    which converges in norm.
    \\
$\bm{2.)}$
\vspace{-3mm}
\begin{enumerate}[label=(\roman*),font=\itshape]
    \item
    For each $a_1,$ we write $\ket{y_{a_1}^{(2,...,N)}} \in \hiltwo \otimes\cdots \otimes \hilN$ in the basis $\{ \ket{k_2,\dots,k_N} \}$ as
    \begin{align}
        \ket{y_{a_1}^{(2,...,N)}} 
        &= \sum_{k_2}\cdots\sum_{k_N} y_{a_1}^{(k_2,...,k_N)} \ket{k_2}\ket{k_3,\dots,k_N}\\
        &=: \sum_{k_2} \ket{k_2} \ket{\chi^{(3,...,N)}_{a_1,k_2} }, \label{c4}
    \end{align}
    where we applied multilinearity and separate continuity to define
    \begin{equation}
        \ket{\chi^{(3,...,N)}_{a_1,k_2}} := \sum_{k_3} \cdots \sum_{k_N} y_{a_1}^{(k_2,...,k_N)}  \ket{k_3,\dots,k_N}
	\in \hilbert_3 \otimes \dots \otimes \hilN.
    \end{equation}
\item 
    Schmidt decompose $\ket{\psi}$ with respect to the partition $(\hilone\otimes \hilbert_2)\otimes(\hilbert_3 \otimes \dots \otimes \hilN)$: 
    \begin{equation}  \label{c5}
        \ket{\psi}
        = \sum_{a_2=0}^\infty \lambda^{(2)}_{a_2} \ket{x^{(1,2)}_{a_2}}  \ket{y^{(3,...,N)}_{a_2} }.
    \end{equation}
	Here $\{ \ket{y^{(3,...,N)}_{a_2} } \}$ can be extended into an orthonormal basis for $\hilbert_3 \otimes \dots \otimes\hilbert_N$ and thus we can write 
	\begin{equation}  \label{c6}
		\ket{\chi^{(3,...,N)}_{a_1,k_2} } = \sum_{a_2=0}^\infty \tau_{a_1,a_2}^{(k_2)}  \ket{y^{(3,...,N)}_{a_2} }
		= \sum_{a_2=0}^\infty \Gamma^{(k_2)}_{a_1,a_2} \lambda^{(2)}_{a_2}  \ket{y^{(3,...,N)}_{a_2} },  
	\end{equation}
	where in the last equality we wrote the tensor coefficients in terms of the Schmidt coefficients as $\tau_{a_1,a_2}^{(k_2)} = \Gamma^{(k_2)}_{a_1,a_2} \lambda^{(2)}_{a_2}.$ Let us quickly justify why this can be done. If for some $a_2$ we have $\lambda^{(2)}_{a_2}\neq0,$ then we can set $\Gamma^{(k_2)}_{a_1,a_2} = \tau_{a_1,a_2}^{(k_2)} / \lambda^{(2)}_{a_2}.$ For the case $\lambda^{(2)}_{a_2}=0,$ notice first that there exist $\beta_{a_2}^{(k_1,k_2)}\in\cc$ such that
    \begin{equation}  \label{c7}
        \ket{x^{(1,2)}_{a_2}} = \sum_{k_1} \sum_{k_2} \beta_{a_2}^{(k_1,k_2)}\ket{k_1, k_2}.
    \end{equation}
    Now by combining (\ref{c3}), (\ref{c4}) and (\ref{c6}) on the left-hand side as well as (\ref{c5}) and (\ref{c7}) on the right-hand side we obtain 
    \begin{align}
        & \sum_{k_1=0}^{d_1-1}\sum_{k_2=0}^{d_2-1}\sum_{a_1=0}^\infty \sum_{a_2=0}^\infty 
        \Gamma^{(k_1)}_{a_1}\lambda^{(1)}_{a_1}\tau^{(k_2)}_{a_1,a_2} \ket{k_1, k_2} \ket{y^{(3,\dots,N)}_{a_2}} \\
        &= \sum_{k_1=0}^{d_1-1}\sum_{k_2=0}^{d_2-1} \sum_{a_2=0}^\infty 
        \lambda^{(2)}_{a_2} \beta_{a_2}^{(k_1,k_2)} \ket{k_1, k_2} \ket{y^{(3,\dots,N)}_{a_2}},
    \end{align}
    which implies that
    \begin{equation}
        \sum_{a_1=0}^\infty \Gamma^{(k_1)}_{a_1}\lambda^{(1)}_{a_1}\tau^{(k_2)}_{a_1,a_2}
        = \lambda^{(2)}_{a_2} \beta_{a_2}^{(k_1,k_2)}.
    \end{equation}
    Now if for some $a_2$ we have $\lambda^{(2)}_{a_2}=0,$ then also $\tau^{(k_2)}_{a_1,a_2}=0$ for every $a_1\in\nn.$ In this case we set $\Gamma^{(k_2)}_{a_1,a_2} = \tau_{a_1,a_2}^{(k_2)}.$
	\item
	Substitute (\ref{c6}) to (\ref{c4}) to (\ref{c3})
 and rearrange the summations to obtain
	\begin{equation}
		\ket{\psi} = 
		\sum_{k_2}\sum_{k_2} \sum_{a_1} \sum_{a_2}
		\Gamma_{a_1}^{(k_1)} \lambda^{(1)}_{a_1}  \Gamma^{(k_2)}_{a_1,a_2} \lambda^{(2)}_{a_2} \ket{k_1, k_2} \ket{y^{(3,...,N)}_{a_2}}.
	\end{equation}
\end{enumerate}
$\bm{3,...,N-1.)}$\\
    We continue this procedure iteratively, repeating steps $(i)$, $(ii)$ and $(iii)$ (with the obvious modifications) for the Schmidt vectors $\ket{y^{(3,\dots,N)}_{a_2}}, \dots ,\ket{y^{(N-1,N)}_{a_{N-2}}} $ until after $N-1$ Schmidt decompositions we obtain an expression of the form
    \begin{equation} \label{c8}
	\ket{\psi} = 
	\sum_{k_1}\cdots\sum_{k_{N-1}} \sum_{a_1}\cdots\sum_{a_{N-1}}
        \Gamma_{a_1}^{(k_1)} \lambda^{(1)}_{a_1}  \Gamma^{(k_2)}_{a_1,a_2} \lambda^{(2)}_{a_2} 
	\cdots 
	\Gamma^{(k_{N-1})}_{a_{N-2},a_{N-1}} \lambda^{(N-1)}_{a_{N-1}} 
	\ket{k_1,\dots,k_{N-1}} \ket{y^{(N)}_{a_{N-1}}}.
    \end{equation}
		\\
$\bm{N.)}$\\
    As the final step, we expand the vectors $\ket{y^{(N)}_{a_{N-1}}} \in \hilN$ in the basis $\{ \ket{k_N} \}$ as
		\begin{equation}  \label{c9}
			\ket{y^{(N)}_{a_{N-1}}} = \sum_{k_N} \Gamma^{(k_N)}_{a_{N-1}} \ket{k_N}.
		\end{equation}
    Substituting (\ref{c9}) to (\ref{c8}), reordering the sums and applying multilinearity and separate continuity yields the expression
    \begin{equation} \label{C-MPS_final}
	\ket{\psi} =
	 \sum_{k_1=0}^\infty\cdots\sum_{k_N=0}^\infty \sum_{a_1=0}^\infty\cdots\sum_{a_{N-1}=0}^\infty
        \Gamma_{a_1}^{(k_1)} \, \lambda^{(1)}_{a_1}  \Gamma^{(k_2)}_{a_1,a_2} \, \lambda^{(2)}_{a_2} 
	\cdots 
        \Gamma^{(k_{N-1})}_{a_{N-2},a_{N-1}} \, \lambda^{(N-1)}_{a_{N-1}}  \Gamma^{(k_{N})}_{a_{N-1}}
	\ket{k_1,\dots,k_{N}}\hspace{-1mm},
    \end{equation}
    with convergence in $\hilone \otimes \cdots \otimes \hilN.$ Thus we have
        \begin{equation}
	c_{k_1,\dots,k_N} = 
	 \sum_{a_1=0}^\infty \cdots \sum_{a_{N-1}=0}^\infty
        \Gamma_{a_1}^{(k_1)} \, \lambda^{(1)}_{a_1}  \Gamma^{(k_2)}_{a_1,a_2} \, \lambda^{(2)}_{a_2} 
	\cdots 
        \Gamma^{(k_{N-1})}_{a_{N-2},a_{N-1}} \, \lambda^{(N-1)}_{a_{N-1}}  \Gamma^{(k_{N})}_{a_{N-1}},
    \end{equation}
    with convergence in $\cc.$
    The Proposition below concludes that (\ref{C-MPS_final}) is indeed a canonical MPS.
\end{proof}

\begin{prop} \label{SD from canonical MPS}
The MPS given in (\ref{C-MPS_final}) is canonical in the sense of Definition \ref{C-MPS}.
\end{prop}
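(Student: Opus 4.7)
The plan is to verify the three requirements in Definition \ref{C-MPS}: that the coefficients $\lambda^{(n)}_{a_n} = \Lambda^{(n)}_{a_n,a_n}$ are nonnegative with $\lambda^{(n)}_{a_n} \to 0$, and that $\{\ket{\phi^{(1,\dots,n)}_{a_n}}\}$ and $\{\ket{\phi^{(n+1,\dots,N)}_{a_n}}\}$ form orthonormal sets appearing in a Schmidt decomposition of $\ket{\psi}$ across the cut $(1,\dots,n):(n+1,\dots,N)$. The canonical construction already performs precisely this Schmidt decomposition at its $n$-th step, producing the Schmidt coefficients $\lambda^{(n)}_{a_n}$ and orthonormal Schmidt vectors $\ket{x^{(1,\dots,n)}_{a_n}}, \ket{y^{(n+1,\dots,N)}_{a_n}}$ via Theorem \ref{SD-Prop}; so the first requirement is automatic, and it suffices to establish the identifications
\[
\ket{\phi_{a_n}^{(1,\dots,n)}} = \ket{x^{(1,\dots,n)}_{a_n}}, \qquad \ket{\phi_{a_n}^{(n+1,\dots,N)}} = \ket{y^{(n+1,\dots,N)}_{a_n}}
\]
on the support of the nonzero Schmidt coefficients.

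I would prove the second identity by downward induction on $n$, with base case $n = N-1$ given directly by equation (\ref{c9}). The inductive step uses the recursion
\[
\ket{y^{(n,\dots,N)}_{a_{n-1}}} = \sum_{k_n}\sum_{a_n} \Gamma^{(k_n)}_{a_{n-1},a_n}\, \lambda^{(n)}_{a_n}\, \ket{k_n} \otimes \ket{y^{(n+1,\dots,N)}_{a_n}},
\]
obtained by combining the basis expansion of $\ket{y^{(n,\dots,N)}_{a_{n-1}}}$ into the $\ket{\chi}$-vectors from step $2.(i)$ with the defining relation for $\ket{\chi^{(n+1,\dots,N)}_{a_{n-1},k_n}}$ from step $2.(ii)$ of the canonical construction. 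Substituting the inductive hypothesis and reindexing then unfolds exactly the product formula for $\ket{\phi^{(n,\dots,N)}_{a_{n-1}}}$ in Definition \ref{C-MPS}. For the first identity I would regroup the sums in the final canonical MPS (\ref{C-MPS_final}) so that $a_n$ becomes the outermost index, producing
\[
\ket{\psi} = \sum_{a_n} \lambda^{(n)}_{a_n}\, \ket{\phi^{(1,\dots,n)}_{a_n}} \otimes \ket{\phi^{(n+1,\dots,N)}_{a_n}},
\]
where the regrouping is justified by associativity, multilinearity and separate continuity of the tensor product together with convergence of the canonical MPS series. Comparing this with the $n$-th step Schmidt decomposition $\ket{\psi} = \sum_{a_n} \lambda^{(n)}_{a_n} \ket{x^{(1,\dots,n)}_{a_n}} \otimes \ket{y^{(n+1,\dots,N)}_{a_n}}$ and invoking the already-proven identity for the right factor, we may equate tensor coefficients against the orthonormal set $\{\ket{y^{(n+1,\dots,N)}_{a_n}}\}$ to conclude $\lambda^{(n)}_{a_n}\ket{\phi^{(1,\dots,n)}_{a_n}} = \lambda^{(n)}_{a_n}\ket{x^{(1,\dots,n)}_{a_n}}$.

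The main obstacle I anticipate is the careful bookkeeping in the recursion together with the treatment of indices where $\lambda^{(n)}_{a_n} = 0$: recall that in step $2.(ii)$ of the construction, when a Schmidt coefficient vanishes the corresponding $\Gamma$ is set by fiat rather than by division, so the recursion must be checked to hold uniformly in $a_n$, and for such indices the $\ket{\phi^{(1,\dots,n)}_{a_n}}$ produced by Definition \ref{C-MPS} is not automatically an orthonormal completion of the Schmidt basis and should be replaced by one without altering $\ket{\psi}$. Once this is handled, the remainder is routine index manipulation in $\ell^2$-convergent series.
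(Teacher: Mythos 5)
Your proposal is correct and follows essentially the same route as the paper: the right Schmidt vectors are identified with the partial MPS products by unfolding the recursion $\ket{y^{(n,\dots,N)}_{a_{n-1}}}=\sum_{k_n}\sum_{a_n}\Gamma^{(k_n)}_{a_{n-1},a_n}\lambda^{(n)}_{a_n}\ket{k_n}\ket{y^{(n+1,\dots,N)}_{a_n}}$ down to the base case (\ref{c9}), and the left factors are identified by matching coefficients against the orthonormal right factors (the paper invokes this matching explicitly only for the cut $(1,\dots,N-1):(N)$ and cites "by construction" for the others, whereas you apply it uniformly). Your explicit attention to the indices with $\lambda^{(n)}_{a_n}=0$ is a point the paper's proof passes over more quickly, but it does not change the argument.
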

\begin{proof}
    By construction, for every $n\in\{ 1,...,N-2 \}$ we have the following expressions for the Schmidt vectors (when taking the Schmidt decomposition with respect to the partition $(\hilone\otimes\cdots\otimes\hilb_n)\otimes(\hilb_{n+1}\otimes\cdots\otimes\hilN)$):
    \begin{align} 
         \ket{x^{(1,...,n)}_{a_n}} &= \sum_{k_1,\dots,k_{n}} \sum_{a_1,...,a_{n-1}}
        \Gamma_{a_1}^{(k_1)} \, \lambda^{(1)}_{a_1}
	   \cdots 
        \lambda^{(n-1)}_{a_{n-1}}  \Gamma^{(k_{n})}_{a_{n-1},a_n}
	   \ket{k_1,\dots,k_{n}},
	   \label{csd1}
    \\
        \ket{y^{(n+1,...,N)}_{a_n}} &= \sum_{k_{n+1}} \sum_{a_{n+1}}
        \Gamma_{a_n,a_{n+1}}^{(k_{n+1})} \, \lambda^{(n+1)}_{a_{n+1}}
	   \ket{k_{n+1}}\ket{y^{(n+2,...,N)}_{a_{n+1}}},
	   \label{csd2}
    \\
        \ket{y^{(N)}_{a_{N-1}}} &= \sum_{k_{N}} \Gamma_{a_{N-1}}^{(k_N)} \ket{k_N}.
        \label{csd3}
    \end{align}
    Based on (\ref{csd2}) and (\ref{csd3}), we can express all of the Schmidt vectors $\ket{y^{(n+1,...,N)}_{a_n}}$ in the form
    \begin{align}
        \ket{y^{(n+1,...,N)}_{a_n}} &= 
        \sum_{k_{n+1}} \sum_{a_{n+1}}
        \Gamma_{a_n,a_{n+1}}^{(k_{n+1})} \, \lambda^{(n+1)}_{a_{n+1}}
	   \ket{k_{n+1}}\ket{y^{(n+2,...,N)}_{a_{n+1}}} \nonumber
        \\
        &=
        \sum_{k_{n+1}}\sum_{k_{n+2}}\sum_{a_{n+1}}\sum_{a_{n+2}}
        \Gamma_{a_n,a_{n+1}}^{(k_{n+1})} \, \lambda^{(n+1)}_{a_{n+1}}
        \Gamma_{a_n,a_{n+2}}^{(k_{n+2})} \, \lambda^{(n+2)}_{a_{n+2}}
        \ket{k_{n+1},k_{n+2}}\ket{y^{(n+3,...,N)}_{a_{n+2}}} \nonumber
        \\
        & \;\: \vdots \nonumber
        \\
        &=\sum_{k_{n+1}} \hspace{-0.3mm} \cdots \hspace{-0.3mm} \sum_{k_N} \sum_{a_{n+1}}\hspace{-0.3mm} \cdots \hspace{-0.3mm}\sum_{a_{N-1}}         \Gamma_{a_n,a_{n+1}}^{(k_{n+1})} \, \lambda^{(n+1)}_{a_{n+1}} \cdots  \lambda^{(N-1)}_{a_{N-1}}  \Gamma^{(k_{N})}_{a_{N-1},a_N} \ket{k_{n+1},\dots,k_N},
    \end{align}
    as desired. Let us check the partition $(\hilone\otimes\cdots\otimes\hilb_{N-1})\otimes(\hilN)$ separately. The equality 
    \begin{align} \label{csd4}
        \ket{\psi} =& 
        \sum_{a_{N-1}}
        \lambda_{a_{N-1}}^{(N-1)} 
        \left( 
        \sum_{k_1,\dots,k_{N-1}} \sum_{a_1,...,a_{N-2}}
        \Gamma_{a_1}^{(k_1)} \, \lambda^{(1)}_{a_1}
	   \cdots 
        \lambda^{(N-2)}_{a_{N-2}}  \Gamma^{(k_{N-1})}_{a_{N-2},a_{N-1}}
	   \ket{k_1,\dots,k_{N-1}}
        \right) \nonumber
        \\
        & \otimes
         \left( 
        \sum_{k_{N}}
        \Gamma^{(k_{N})}_{a_{N-1}}\ket{k_{N}}
        \right),
    \end{align}
    obviously holds with convergence in the norm of $\tpN$ (this can be seen by applying multilinearity and separate continuity of the tensor product).
    Additionally, by (\ref{csd3}) we have   $\ket{y^{(N)}_{a_{N-1}}} = \sum_{k_{N}} \Gamma_{a_{N-1}}^{(k_N)} \ket{k_N}.$ Thus the sum (\ref{csd4}) has the Schmidt coefficients and the right Schmidt vectors $\ket{y^{(N)}_{a_{N-1}}}$, and we deduce that necessarily 
    \footnote{Because in general if $\{\ket{\phi_k}\}$ is an orthonormal set and the equality
    $ \sumk \ket{\xi_k}\otimes\ket{\phi_k} = \sumk \ket{\chi_k}\otimes\ket{\phi_k} $
    holds, 
    then $\ket{\xi_k} = \ket{\chi_k}$ for every $k\in\nn.$}  
    \begin{equation}
         \sum_{k_1}\hspace{-1mm}\cdots\hspace{-1mm}\sum_{k_{N-1}}\hspace{-1mm} \sum_{a_1}\hspace{-1mm}\cdots\hspace{-1mm}\sum_{a_{N-2}}
        \Gamma_{a_1}^{(k_1)} \, \lambda^{(1)}_{a_1}
	   \cdots 
        \lambda^{(N-2)}_{a_{N-2}}  \Gamma^{(k_{N-1})}_{a_{N-2},a_{N-1}}
	   \ket{k_1,\dots,k_{N-1}} = \ket{x^{(1,...,N-1)}_{a_{N-1}}}.       
    \end{equation}
    Therefore we have a valid Schmidt decomposition and thus the MPS (\ref{C-MPS_final}) is canonical.
\end{proof}
Using tensor diagrams, the above construction can be written as 
\begin{equation}
    \includegraphics[]{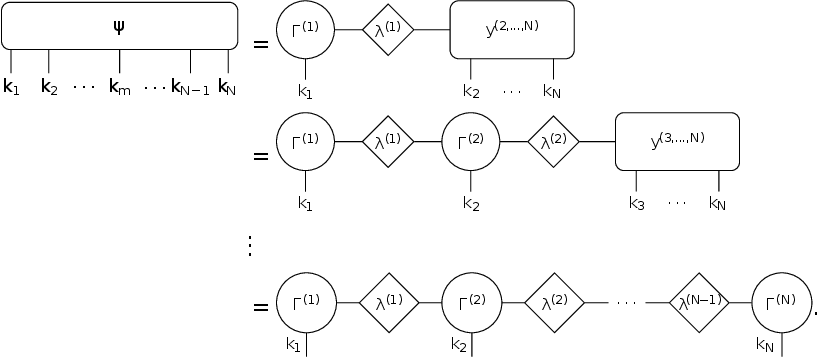}
\end{equation}

\subsection{Interpretation of idMPS as a Product of Operators}
As the construction of idMPS is exactly analogous to the finite-dimensional case, idMPS inherit the properties of regular MPS. However, as the bond dimension may be infinite, the matrices now give rise to operators on possibly infinite-dimensional (auxiliary) Hilbert spaces, and it is interesting to study the properties of these operators. For example, if we could show that they are compact (under some assumptions), then each of them could be individually approximated by finite-rank operators.

Let us explain how we can interpret an infinite-dimensional MPS as a composition of operators. Consider a left-canonical three-particle MPS given by
    \begin{equation}
        \ket{\psi} = \sum_{k_1} \sum_{k_2} \sum_{k_3} \sum_{a_1} \sum_{a_{2}}
	A_{a_1}^{(k_1)} A_{a_{1},a_{2}}^{(k_{2})} A_{a_{2}}^{(k_{3})}
	\ket{k_1,k_2,k_3}.
    \end{equation}
    The generalization of the following to the general $N$-particle case is straightforward.
    
    We would like to be able to identify $A_{a_{2}}^{(k_{3})}$ with an $\ell^2$ sequence, $A_{a_1}^{(k_1)}$ with a functional in $(\ell^2)^*$ and $A_{a_{1},a_{2}}^{(k_{2})}$ with an operator acting on $\ell^2.$

    For a fixed value of $k_1,$ we can define a functional $T^{(1,k_1)}$ acting on $\ell^2$ according to the formula
    \begin{equation}
        T^{(1,k_1)} x := \sum_{a_1} A_{a_1}^{(k_1)} x_{a_1},
    \end{equation}
    where $x=(x_{a_1})_{a_1\in\nn} \in \ell^2.$

    For a fixed value of $k_2,$ we can define an operator $T^{(2,k_2)}$ acting on $\ell^2$ such that
    \begin{equation}
        \big( T^{(2,k_2)} x \big)_n := \sum_{a_2} A^{(k_2)}_{n,a_2} x_{a_2},
    \end{equation}
    where $x=(x_{a_2})_{a_2\in\nn} \in \ell^2.$

    Finally, for a fixed value of $k_3,$ we can define a sequence $(x_n)_{n\in\nn}$ by $x_n := A^{(k_3)}_{n}.$ To show that $(x_n)\in\ell^2,$ we would have to prove that for any fixed $k_3$ it holds that $\sum_{n=0}^\infty |A^{(k_3)}_{n}|^2 < \infty.$

\section{MPS for a Chain of Three Coupled Harmonic Oscillators}
\subsection{The Problem}

As an application of the previous results, in this section we construct an idMPS expression for certain eigenstates of a chain of three coupled harmonic oscillators.
The system under consideration is governed by the Hamiltonian 
\begin{equation}
    H = \frac{1}{2}\left(\sum_{i=1}^{3}\frac{p_i^2}{m_i} +m_i\omega_i^2x_i^2 \right) + D_{12}x_1x_2 + D_{13}x_1x_3 + D_{23}x_2x_3.
\end{equation}

It is demonstrated in \cite{merdaci2020entanglement} and \cite{merdaci2024entanglement} that under certain assumptions (see equations (3)-(5) in \cite{merdaci2024entanglement}) the eigenstates of this system can be written in the form 
\begin{align}
    \psi^{ABC}_{n_1,n_2,n_3}(x_1,x_2,x_3)
    =& 
    \frac{\left( \frac{m \Tilde{\omega} }{\pi \hbar} \right)^{3/4}}{\sqrt{n_1!n_2!n_3!2^{n_1+n_2+n_3}}} e^{-\frac{m\Tilde{\omega}}{2\hbar} (q_1^2+q_2^2+q_3^2) } \nonumber \\
    & \;\; \hermite{n_1}\left( q_1\sqrt{ \frac{m\Tilde{\omega}}{\hbar}} \right) 
    \hermite{n_2}\left( q_2 \sqrt{ \frac{m\Tilde{\omega}}{\hbar}} \right)
    \hermite{n_3}\left( q_3 \sqrt{\frac{m\Tilde{\omega}}{\hbar}} \right),
\end{align}
where $\hermite{n_i}$ are the (physicist's) Hermite polynomials, and explicit expressions for the parameters $\omegatilde$ and $m$ as well as the coordinates $q_i$ in terms of $x_i$ are given in Appendix A of \cite{merdaci2024entanglement}.

Setting $m=1$ and $\hbar=1$ and considering the special case $n_1=n_2=0$ yields the simplified expression
\begin{align}
    \psi^{ABC}_{0,0,n_3}(x_1,x_2,x_3)
    =& 
    \frac{\left( \Tilde{\omega}/\pi \right)^{3/4}}{\sqrt{
    n_3!2^{n_3}}} e^{-\frac{\Tilde{\omega}}{2} (q_1^2+q_2^2+q_3^2) } 
    \hermite{n_3}\left( q_3 \sqrt{\Tilde{\omega}} \right).
\end{align}

\subsection{Constructing the MPS Representation}
In this section we derive a left-canonical MPS representation for the eigenstates with $n_1=0$ and $n_2=0$ in the unscaled bases $\{ f^{(i)}_k(x_i) \}\subseteq \hili,$ where 
\begin{equation} \label{oscillator_single_particle_basis}
    f^{(i)}_k(x_i) = \frac{1}{\pi^{1/4}\sqrt{2^k k!}} e^{-x_i^2/2} \hermite{k} \left(x_i \right).
\end{equation}
The necessary Schmidt decompositions are derived in \cite{merdaci2024entanglement} (equations (36), (38), (52), (93) thereof), and are given by
\begin{align}
    \psi_{0,0,n}(x_1,x_2,x_3) &= \sum_{a=0}^n \sqrt{\alpha_a} \: \varphi^A_{a}(x_1) \Theta^{BC}_a(x_2,x_3) \in \hilone \otimes \left( \hiltwo \otimes \hilb_3 \right) \label{oscillator-sd-1:23} \\
    &= \sum_{b=0}^n \sqrt{\gamma_b} \: \Xi^{AB}_{b}(x_1,x_2) \chi^{C}_b(x_3) \in \left( \hilone \otimes \hilb_2 \right) \otimes \hilb_3, \label{oscillator-sd-12:3}
\end{align}
where
\begin{align}
    \varphi^A_{a}(x_1) &= \left( \frac{ \sqrt{\omegatilde} }{\sqrt{\pi}2^a a!  } \right)^{ \frac{1}{2} }
    e^{-\omegatilde x_1^2/2}
    \hermite{a}\left( \sqrt{\omegatilde} x_1 \right), \\
    \phi^B_{l}(x_2) &= \left( \frac{ \sqrt{\omegatilde} }{\sqrt{\pi}2^l l!  } \right)^{ \frac{1}{2} } e^{-\omegatilde x_2^2/2} \hermite{l}\left( \sqrt{\omegatilde} x_2 \right), \\
    \chi^{C}_b(x_3) &= \left( \frac{ \sqrt{\omegatilde} }{\sqrt{\pi}2^b b!  } \right)^{ \frac{1}{2} } e^{-\omegatilde x_3^2/2} \hermite{b}\left( \sqrt{\omegatilde} x_3 \right), \\
    \Theta^{BC}_a(x_2,x_3) &= \sum_{l=0}^{n-a} \phi^B_{l}(x_2) \chi^{C}_{n-a-l} (x_3), \\ 
    \Xi^{AB}_{b}(x_1,x_2) &= \sum_{k=0}^{n-b} \varphi^A_{k}(x_1) \phi^B_{n-k-b}(x_2), \\ 
    \alpha_a &= \frac{n!}{a!(n-a)!} \sin^{2a}\theta \cos^{2a}\phi \left( 
1- \sin^2\theta \cos^2\phi \right)^{n-a}, \\ 
    \gamma_b &= \frac{n!}{b!(n-b)!} \left( \cos\theta\cos\varphi + \sin\theta\sin\phi\sin\varphi \right)^{2b} \nonumber \\ 
    & \;\;\;\; \left[ (\cos\theta\sin\varphi - \sin\theta\cos\varphi\sin\phi)^2 + \cos^2\phi\sin^2\theta \right]^{n-b}. 
\end{align}

In the construction of the MPS we encounter the integral
\begin{equation}
    I_{i,j} = \int_{-\infty}^\infty e^{-\frac{1}{2}( x^2 + \Tilde{\omega} x^2 )} \hermite{i}(x)\hermite{j}(\sqrt{\Tilde{\omega}}\,x)dx,
\end{equation}
for which we compute a closed form expression in Appendix \ref{Appendix B}.

At this point one can notice that constructing the MPS in the basis $\{ \varphi^A_{k}(x_1) \phi^B_{l}(x_2) \chi^{C}_m(x_3) \}$ would yield a trivial MPS with two identity matrices and a third with Schmidt coefficients on the diagonal. Let us proceed in the basis $\{f^{(1)}_k(x_1) f^{(2)}_l(x_2) f^{(3)}_m(x_3) \}.$
\\\\
\textbf{1.)} 
Let us first expand the Schmidt vectors $\varphi^A_{a}(x_1)$ and $\Theta^{BC}_a(x_2,x_3)$ in terms of the basis vectors $f^{(1)}_k(x_1)$ and $f^{(2)}_l(x_2) f^{(3)}_m(x_3),$ respectively. Defining first
\begin{equation}
    C_{i,j} =  \sqrt{\frac{\sqrt{\Tilde{\omega}}}{\pi 2^i 2^j i! j!}},
\end{equation}
we obtain
\begin{align}
    \varphi^A_{a}(x_1) = \sum_{k=0}^\infty \braket{f^{(1)}_k | \varphi^A_{a}} f^{(1)}_k(x_1) 
    = \sum_{k=0}^\infty C_{k,a} I_{k,a} f^{(1)}_k(x_1), \label{oscillator-mps-sd1}
\end{align}
and
\begin{align}
        \Theta^{BC}_a(x_2,x_3) =& \sum_{l=0}^\infty \sum_{m=0}^\infty \braket{f^{(2)}_l f^{(3)}_m | \Theta^{BC}_a} f^{(2)}_l(x_2)f^{(3)}_m(x_3) \nonumber \\
         =& \sum_{l=0}^\infty \sum_{m=0}^\infty \sum_{l'=0}^{n-a} 
         C_{l,l'} C_{m,n-a-l'}
         I_{l,l'} I_{m,n-a-l'} f^{(2)}_l(x_2)f^{(3)}_m(x_3).
         \label{oscillator-mps-sd2}
\end{align}
Combining (\ref{oscillator-sd-1:23}), (\ref{oscillator-mps-sd1}) and (\ref{oscillator-mps-sd2}) yields the expression
\begin{align}
    \psi_{0,0,n}(x_1,x_2,x_3) 
    = \sum_{k,l,m=0}^\infty \sum_{a=0}^n &
    \sqrt{\alpha_a}  C_{k,a} I_{k,a} 
    \left( \sum_{l'=0}^{n-a} C_{l,l'} C_{m,n-a-l'} I_{l,l'} I_{m,n-a-l'} \right) \nonumber\\
    &f^{(1)}_k(x_1)f^{(2)}_l(x_2)f^{(3)}_m(x_3) \nonumber \\
    = \sum_{k,l,m=0}^\infty\sum_{a=0}^n & A^{(1,k)}_a \sqrt{\alpha_a}
    \left( \sum_{l'=0}^{n-a} C_{l,l'} C_{m,n-a-l'} I_{l,l'} I_{m,n-a-l'} \right)
    \nonumber\\
    &f^{(1)}_k(x_1)f^{(2)}_l(x_2)f^{(3)}_m(x_3),
\end{align}
where we denoted 
\begin{equation} \label{oscillator-mps-A1}
   A^{(1,k)}_a = C_{k,a} I_{k,a},
\end{equation}
writing the site index $1$ explicitly to avoid confusion.
\\\\
\textbf{2.)} Let us now expand the Schmidt vectors $\Xi^{AB}_{b}(x_1,x_2) $ and $\chi^{C}_b(x_3)$ in terms of the basis vectors $\varphi^{A}_a(x_1)f^{(2)}_l(x_2)$ and $f^{(3)}_m(x_3),$ respectively. Proceeding as in step 1 we obtain
\begin{align}
    \Xi^{AB}_{b}(x_1,x_2) 
    &= \sum_{a=0}^{n} \sum_{l=0}^\infty \indic_{\{a \leq n-b\}} C_{l,n-a-b} I_{l,n-a-b} \varphi^{A}_a(x_1)f^{(2)}_l(x_2) \label{oscillator-mps-sd3}
\end{align}
and 
\begin{align}
    \chi^{C}_b(x_3) = \sum_{m=0}^\infty C_{m,b}I_{m,b} f^{(3)}_m(x_3),
    \label{oscillator-mps-sd4}
\end{align}
where $C_{i,j}$ is as previously and $\indic_{\{a \leq n-b\}}$ is the indicator function. Now combining (\ref{oscillator-sd-12:3}), (\ref{oscillator-mps-sd3}) and (\ref{oscillator-mps-sd4}) yields the expression 
\begin{align}
    \psi_{0,0,n}(x_1,x_2,x_3) \hspace{-1mm} &= \hspace{-1mm} \sum_{l=0}^\infty \hspace{-0.3mm} \sum_{m=0}^\infty \hspace{-0.3mm} \sum_{a=0}^{n} \hspace{-0.3mm} \sum_{b=0}^{n} \hspace{-0.4mm} \sqrt{\gamma_b} 
    \indic_{\{a+b \leq n\}} C_{l,n-a-b} I_{l,n-a-b} C_{m,b}I_{m,b}
    \varphi^{A}_a(x_1) f^{(2)}_l\hspace{-0.4mm}(x_2) f^{(3)}_m\hspace{-0.4mm}(x_3).
\end{align}
Expanding $\varphi^{A}_a(x_1)$ in terms of $f^{(1)}_k$ as in step 1 yields
\begin{align}
    \psi_{0,0,n}(x_1,x_2,x_3) = \sum_{k=0}^\infty \sum_{l=0}^\infty \sum_{m=0}^\infty \sum_{a=0}^{n} \sum_{b=0}^{n} & A^{(1,k)}_a \sqrt{\gamma_b} 
    \indic_{\{a+b \leq n\}} C_{l,n-a-b} I_{l,n-a-b} C_{m,b}I_{m,b}
    \nonumber \\
    & f^{(1)}_k(x_1) f^{(2)}_l(x_2) f^{(3)}_m(x_3).
\end{align}
Denoting
\begin{equation} \label{oscillator-mps-A2}
    A^{(2,l)}_{a,b} =\indic_{\{a+b \leq n\}}C_{l,n-a-b} I_{l,n-a-b}
\end{equation}
and
\begin{equation} \label{oscillator-mps-A3}
    A^{(3,m)}_b = \sqrt{\gamma_b} C_{m,b}I_{m,b}
\end{equation}
yields the MPS
\begin{align} 
    \psi_{0,0,n}(x_1,x_2,x_3) &= \sum_{k=0}^\infty \sum_{l=0}^\infty \sum_{m=0}^\infty \sum_{a=0}^{n} \sum_{b=0}^{n} A^{(1,k)}_a A^{(2,l)}_{a,b} A^{(3,m)}_b
    f^{(1)}_k(x_1) f^{(2)}_l(x_2) f^{(3)}_m(x_3). \label{oscillator-mps}
\end{align}
Thus we obtained an MPS of finite bond dimension, where each coefficient of the wavefunction in our chosen basis is given as a product of three matrices.

\begin{figure}
\includegraphics[width=.33\linewidth]{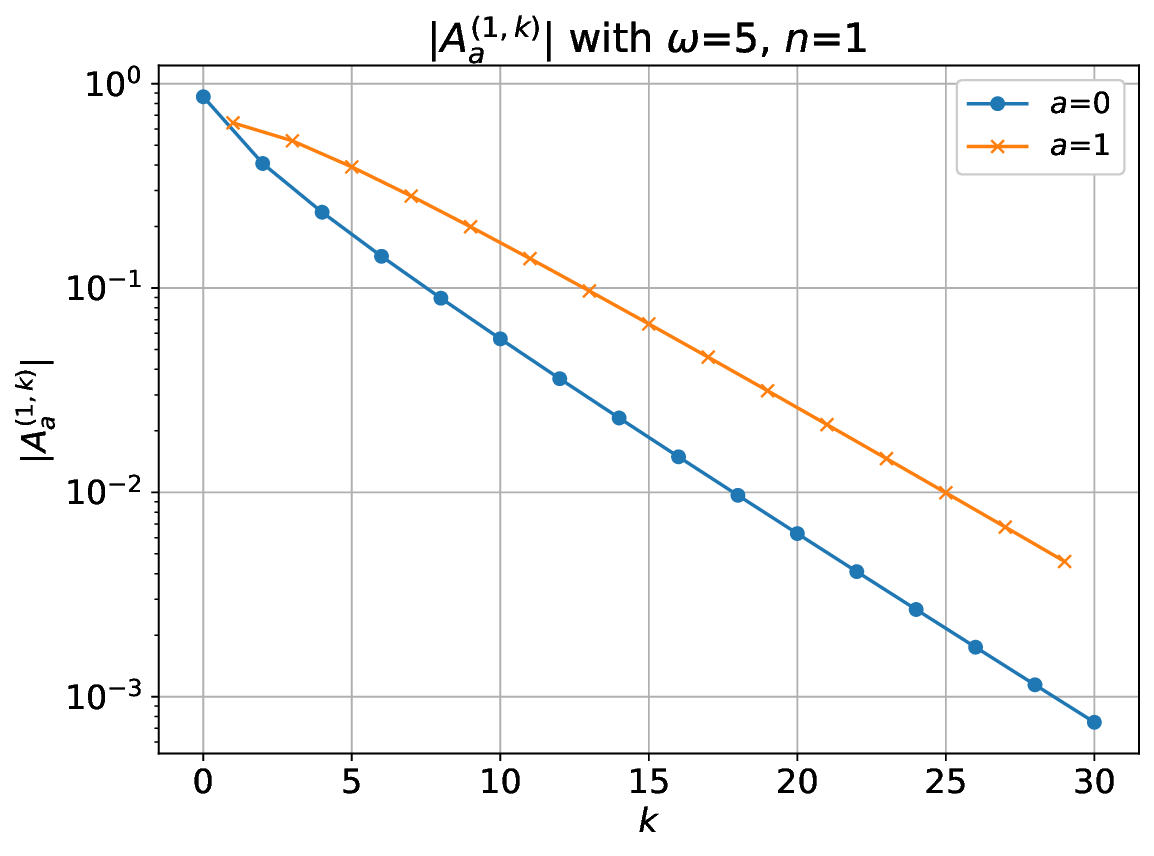}\hfill
\includegraphics[width=.33\linewidth]{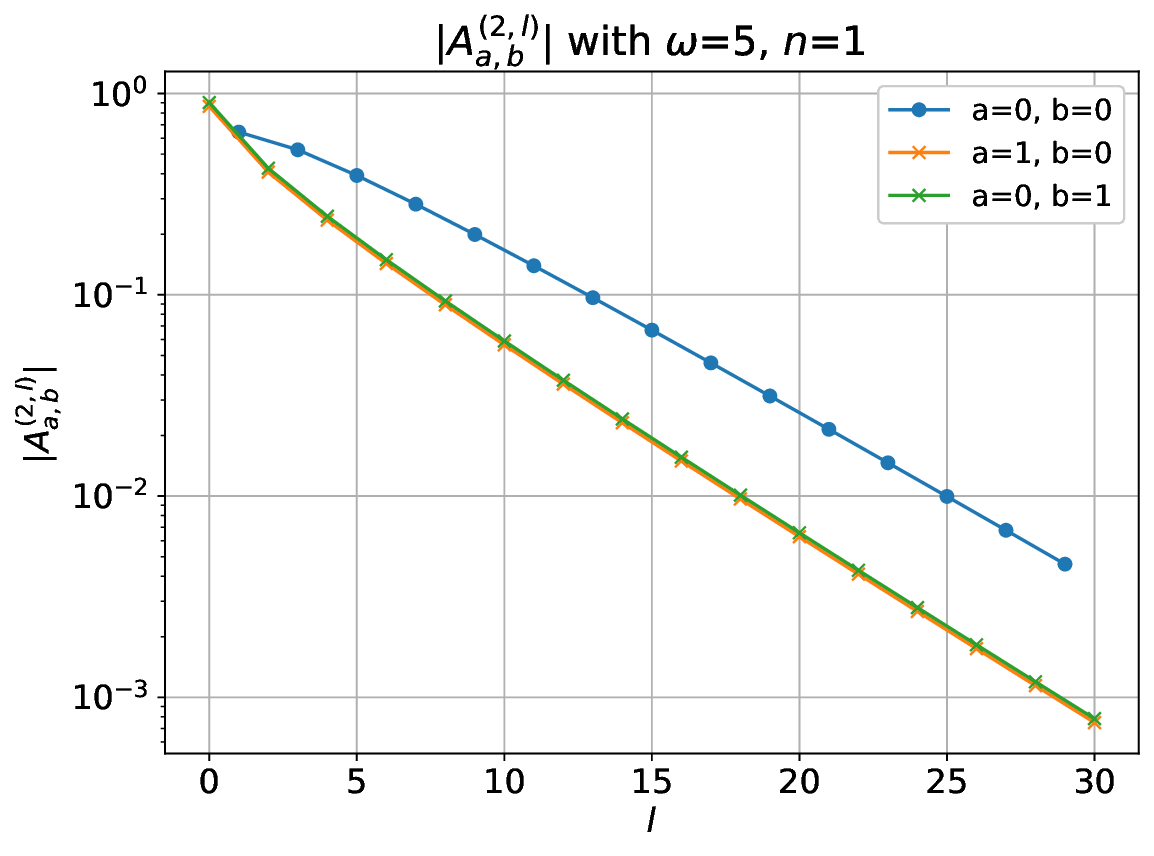}
\includegraphics[width=.33\linewidth]{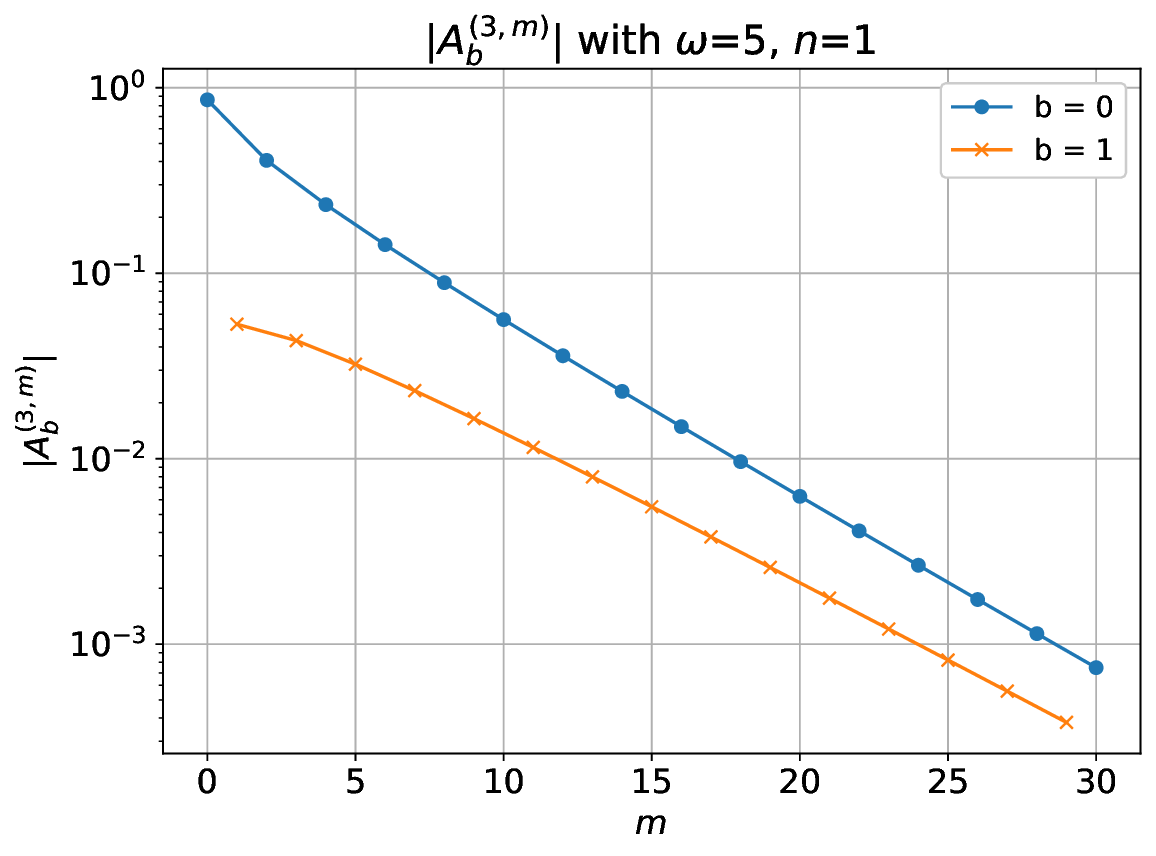}
\caption{(color online) Absolute values of the nonzero MPS matrix elements for the first excited state ($n=1$) as functions of the physical indices with $\omega=5$ and $D_{12}=D_{23}=0.25,\\ D_{13}=0.$ They are monotone decreasing, and exponentially decaying after a certain point.}
\end{figure}

In Figures 1 and 2 we have plotted numerical values of the nonzero absolute values of the matrix elements $A^{(1,k)}_a, A^{(2,l)}_{a,b}$ and $A^{(3,m)}_b$ as functions of the physical indices $k,l$ and $m,$ respectively, for the first two excited states of the system with nearest-neighbor interactions and for certain values of the physical parameters. The matrices $A^{(1,k)}$ and $A^{(2,l)}$ share the same elements and $A^{(2,l)}$ is symmetric. Both of these facts can be directly deduced from (\ref{oscillator-mps-A1}) and (\ref{oscillator-mps-A2}). After a certain point (which depends on the quantum number), all of the matrix elements decay exponentially.

\section{Conclusion and Outlook}

We demonstrated that any element in the tensor product of separable infinite-dimensional Hilbert spaces can be expressed as an infinite-dimensional matrix product state (idMPS) in any of the canonical forms. The existence of the Schmidt decomposition in general separable Hilbert spaces
allowed us to construct idMPS in a manner completely analogous to the already well-established finite-dimensional case. Therefore idMPS inherit many of the desirable properties associated with finite-dimensional MPS. 

\begin{figure}
\includegraphics[width=.33\linewidth]{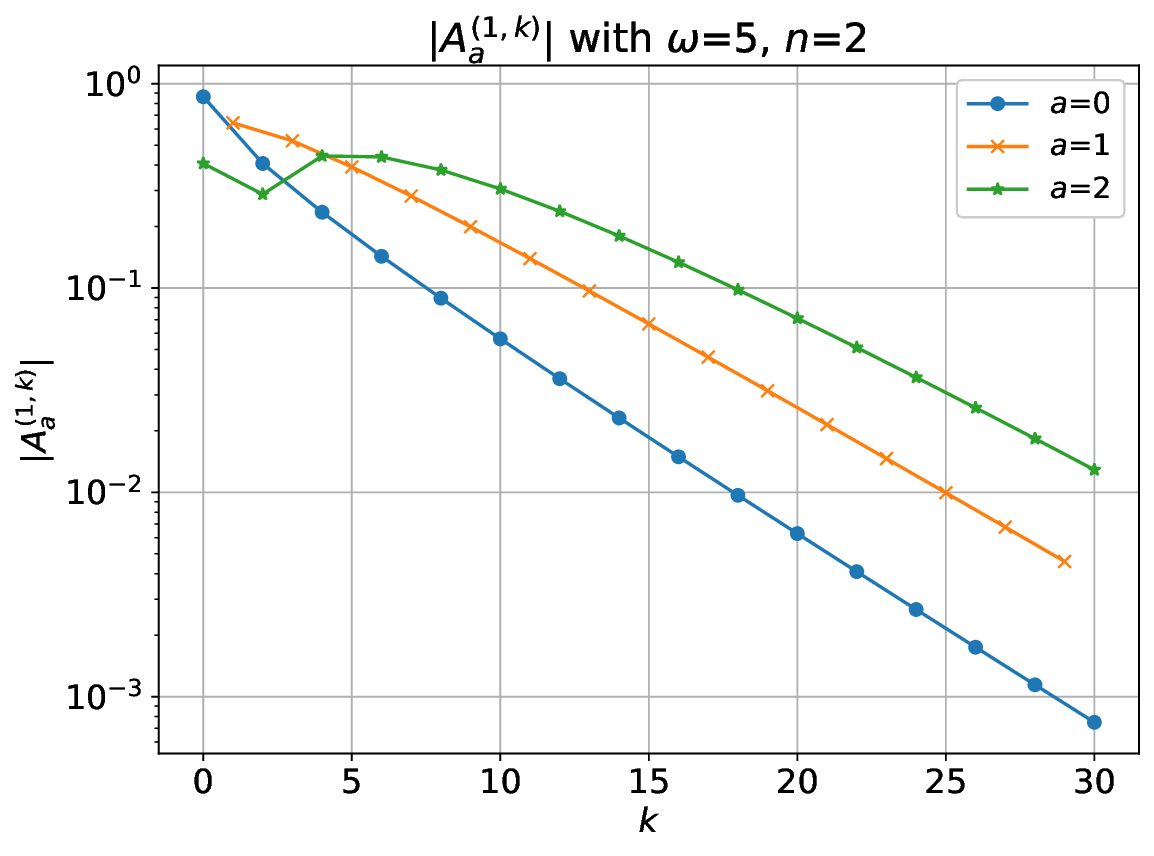}\hfill
\includegraphics[width=.33\linewidth]{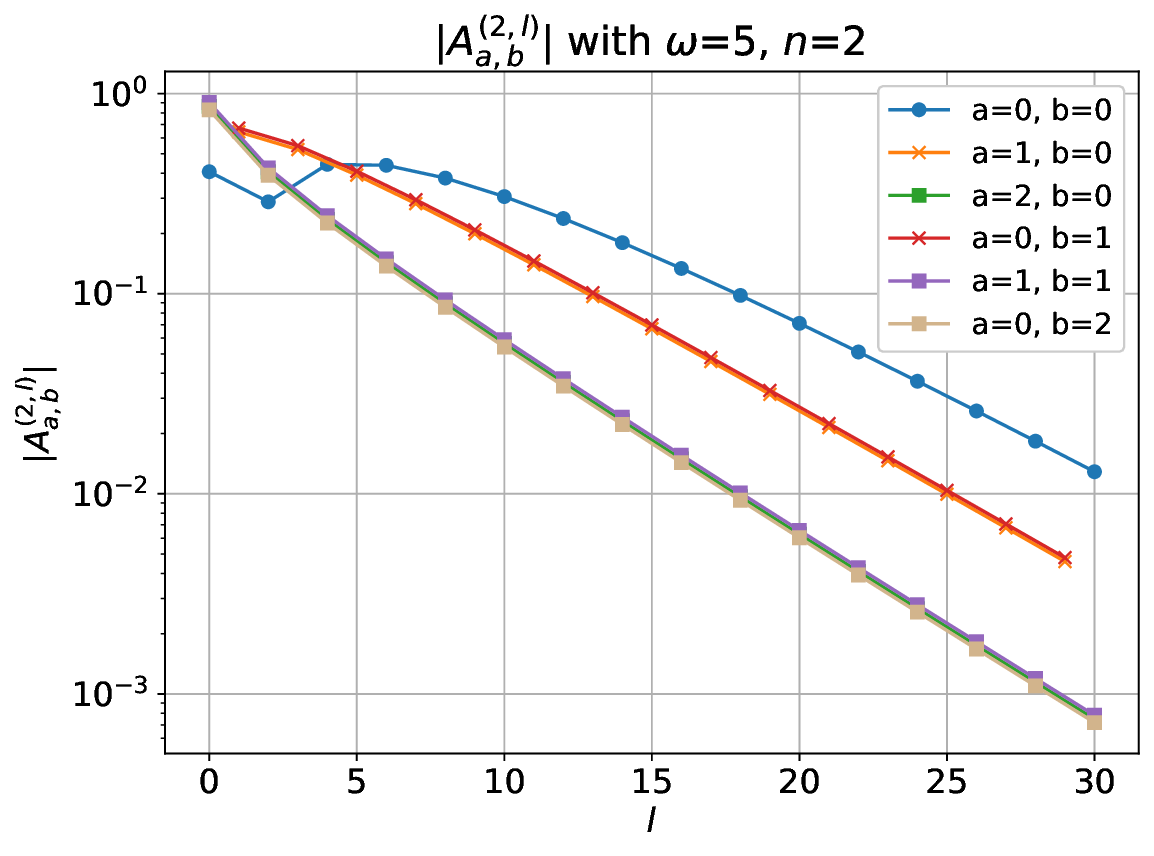}
\includegraphics[width=.33\linewidth]{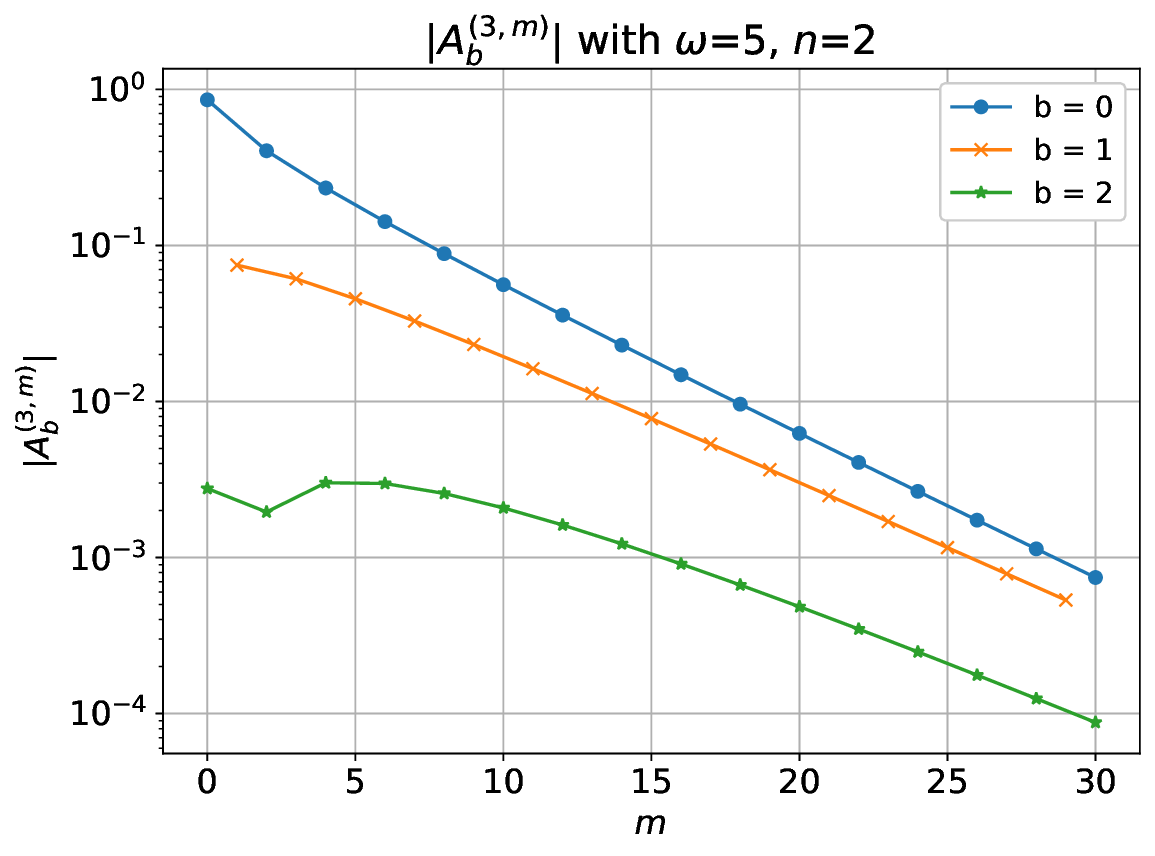}
\caption{(color online) Absolute values of the nonzero MPS matrix elements for the second excited state ($n=2$) as functions of the physical indices with $\omega=5$ and $D_{12}=D_{23}=0.25,\\ D_{13}=0.$ Some of them reach a maximum, after which they decay exponentially.}
\end{figure}

Additionally, we explicitly constructed an analytical MPS representation for certain eigenstates of a chain of three coupled quantum harmonic oscillators. 
It should be possible to generalize the results of Section 4 to the general $N$-particle case. Furthermore, it could be interesting to consider the continuous limit in the continuous MPS formalism introduced in \cite{verstraete2010continuous}.

An interesting and natural generalization of the results of this paper would be to study the hierarchical Tucker (HT) format in the context of infinite-dimensional Hilbert spaces, allowing for a more general tree-like structure of the tensor network representing vectors in tensor product spaces. 

Another interesting question is the nature of the operators acting on the auxiliary spaces in idMPS, allowing to draw parallels between MPS and operator theory on Hilbert spaces. As compact operators can be approximated by finite-rank operators, it is an interesting question under which assumptions the idMPS operators turn out to be compact. In the same context one could investigate the error introduced when approximating infinite-dimensional MPS with finite-dimensional MPS, and try to obtain analytical error estimates when truncating both in the physical and auxiliary Hilbert spaces.

Similarly as MPS have turned out useful in classically simulating certain kinds of quantum computations, idMPS could have applications in the context of continuous-variable quantum computation \cite{buck2021continuous}. Finally, considering the continuous limit of idMPS might lead to connections with continuous MPS and broaden their applicability in the study of one-dimensional quantum field theories.

\section*{Acknowledgements}
I would like to thank my supervisors Jani Lukkarinen and Paolo Muratore-Ginanneschi, as the completion of this paper would not have been possible without their invaluable support and feedback. The research has been supported by the Research Council of Finland, via the Finnish Centre of Excellence in Randomness and Structures (CoE FiRST, project No. 346306). Also, I acknowledge the support from the QDOC program of the Research Council of Finland.

\section*{Appendix}
\appendix

\section{The Schmidt Decomposition} \label{Appendix A}
The proof of the Schmidt decomposition is based on the singular value decomposition of compact operators and the fact that the tensor product of Hilbert spaces is (conjugate-)isomorphic to the space of Hilbert-Schmidt operators, which are compact.
\begin{prop} (SVD of Compact Operators) \\
		If $T \in \bddhh$ is a compact operator with rank $N\in \nn_0 \cup \{\infty \}$, then there exist
  orthonormal sets $\{ \ket{e_k} \}_{k=1}^N \subseteq \hilone$ and $\{ \ket{f_k }\}_{k=1}^N \subseteq \hiltwo$ and positive real numbers 
		$\{ \lambda_k\}_{k=1}^N$ with $\lambda_k \limk 0$ (if $N$ is infinite) such that 
		\begin{equation}\label{SVD}
			T
   =\sum_{k=1}^N \lambda_k \ket{f_k}\bra{e_k}.
		\end{equation}
		The sum, which may be infinite or finite, converges to T in operator norm. The numbers $\lambda_k$ are called the \emph{singular values} of $T$ and the expression (\ref{SVD}) the \emph{singular value decomposition} of $T$.

\begin{proof}
    See e.g. Theorem 1.6 of \cite{crane2020singular} or Theorem VI.17 of \cite{reed1981functional}.
\end{proof}
\end{prop}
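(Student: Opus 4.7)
The plan is to reduce the SVD to the spectral theorem for the compact, self-adjoint, positive operator $T^*T\in\bddhilo$. First I would observe that $T^*T$ is compact (composition of a compact and a bounded operator) and that $\ker(T^*T)=\ker(T)$, so in particular $\rank(T^*T)=\rank(T)=N$. Applying the Hilbert--Schmidt/Riesz--Schauder spectral theorem to $T^*T$ yields an orthonormal family $\{\ket{e_k}\}_{k=1}^N$ of eigenvectors spanning the closed subspace $\ker(T)^\perp$, with strictly positive eigenvalues $\mu_k$; these can be ordered decreasingly and, when $N=\infty$, satisfy $\mu_k\to 0$. The singular values will then be $\lambda_k:=\sqrt{\mu_k}$.

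Second, I would define $\ket{f_k}:=\lambda_k^{-1}T\ket{e_k}\in\hiltwo$. The identity
\begin{equation*}
\ip{Te_j}{Te_k}=\ip{e_j}{T^*T e_k}=\mu_k\,\delta_{jk}
\end{equation*}
immediately gives orthonormality of $\{\ket{f_k}\}_{k=1}^N$. Writing any $\ket{x}\in\hilone$ as $\ket{x_0}+\sum_k \ip{e_k}{x}\ket{e_k}$ with $\ket{x_0}\in\ker(T)$, and using continuity of $T$, I obtain the pointwise identity $T\ket{x}=\sum_{k=1}^N\lambda_k\ip{e_k}{x}\ket{f_k}$; equivalently, $T$ agrees with $\sum_k\lambda_k\op{f_k}{e_k}$ on all of $\hilone$.

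The remaining task is to upgrade pointwise convergence to operator-norm convergence when $N=\infty$. Letting $T_M$ denote the $M$-th partial sum, a direct tail estimate on a unit vector, using orthonormality of $\{\ket{e_k}\}$ together with the decreasing ordering of the $\lambda_k$, yields $\norm{(T-T_M)\ket{x}}\leq\lambda_{M+1}$, hence $\norm{T-T_M}\leq\lambda_{M+1}\to 0$. The main obstacle is purely the structural input from Riesz--Schauder theory: that $T^*T$ admits an orthonormal eigenbasis of $\ker(T)^\perp$ with nonzero eigenvalues accumulating only at $0$ and each eigenspace finite-dimensional. I would cite this rather than reprove it. Compactness of $T$ enters nontrivially only in this last operator-norm step, through $\lambda_k\to 0$, and this is exactly where a naive argument for a general bounded self-adjoint operator would fail.
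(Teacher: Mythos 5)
Your argument is correct and is precisely the standard proof underlying the references the paper cites for this proposition (the paper itself offers no proof, only a pointer to Reed--Simon Theorem VI.17 and similar sources, whose proofs proceed exactly as you do: spectral theorem for the compact positive operator $T^*T$, definition $\ket{f_k}:=\lambda_k^{-1}T\ket{e_k}$, and the tail estimate $\norm{T-T_M}\leq\lambda_{M+1}$ for operator-norm convergence). All the steps check out, including the identification $\ker(T^*T)=\ker(T)$ and the correct isolation of where compactness is genuinely needed, so there is nothing to add.
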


\begin{prop} \label{SVD-HS}
    If $T\in\linearhshh,$ then its singular value decomposition (which exists because Hilbert-Schmidt operators are compact) converges to $T$ in both Hilbert-Schmidt and operator norms.
\begin{proof}
    The fact that the SVD converges to some operator $T$ in the Hilbert-Schmidt norm follows from orthonormality of the singular vectors and the equality $\normhs{T}^2 = \sum_{k=1}^N \lambda_k^2,$ where $\{\lambda_k\}$ are the singular values of $T$. The fact that the SVD converges to the same operator in both norms now follows from the estimate $\norm{T-S_n}_{op} \leq \normhs{T-S_n} \limn 0,$ where $S_n$ denotes the $n$th partial sum of the SVD.
\end{proof}
\end{prop}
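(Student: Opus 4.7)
The plan is to establish Hilbert--Schmidt convergence of the partial sums of the SVD first, and then reconcile it with the operator--norm convergence already guaranteed by the preceding proposition via uniqueness of limits. Write $S_n := \sum_{k=1}^n \lambda_k \ket{f_k}\bra{e_k}$ for the partial sums of the SVD of $T \in \linearhshh$.

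First I would verify that the rank--one operators $\ket{f_k}\bra{e_k}$ form an orthonormal family in $\linearhshh$ with respect to the Hilbert--Schmidt inner product. A direct calculation gives
\begin{equation}
\braket{\ket{f_j}\bra{e_j}, \ket{f_k}\bra{e_k}}_{HS} = \mathrm{tr}\bigl(\ket{e_j}\bra{f_j}\,\ket{f_k}\bra{e_k}\bigr) = \braket{f_j|f_k}\braket{e_k|e_j} = \delta_{jk},
\end{equation}
thanks to orthonormality of the singular vectors. Consequently, for $m\le n$, $\normhs{S_n - S_m}^2 = \sum_{k=m+1}^n \lambda_k^2$. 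The next ingredient is the identification $\normhs{T}^2 = \sum_k \lambda_k^2$, which follows from expanding $\normhs{T}^2 = \sum_j \norm{T\hat{e}_j}^2$ in an orthonormal basis of $\hilone$ chosen to extend $\{\ket{e_k}\}$; since $T$ is Hilbert--Schmidt this sum is finite, and therefore the tails $\sum_{k > m}\lambda_k^2$ vanish as $m\to\infty$. Hence $(S_n)$ is Cauchy in the Hilbert space $\linearhshh$ equipped with the Hilbert--Schmidt norm, and thus converges in that norm to some $T' \in \linearhshh$.

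To conclude, I would use the elementary estimate $\norm{A}_{op}\le \normhs{A}$ for any Hilbert--Schmidt operator $A$, which yields $\norm{S_n - T'}_{op}\le \normhs{S_n - T'}\limn 0$, so that $S_n \to T'$ in operator norm as well. Since the preceding proposition already gives $S_n \to T$ in operator norm and limits there are unique, $T = T'$, whence $S_n \to T$ in both norms, as required. The main subtlety is the identification $\normhs{T}^2 = \sum_k \lambda_k^2$ in the infinite--dimensional setting; everything else is Cauchy--sequence bookkeeping combined with the domination of the operator norm by the Hilbert--Schmidt norm.
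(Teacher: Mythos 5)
Your argument is correct and follows essentially the same route as the paper: orthonormality of the rank-one operators $\ket{f_k}\bra{e_k}$ in the Hilbert--Schmidt inner product together with $\normhs{T}^2=\sum_k\lambda_k^2$ gives Hilbert--Schmidt convergence, and the bound $\norm{\cdot}_{op}\le\normhs{\cdot}$ transfers this to the operator norm. You merely spell out two details the paper leaves implicit --- the Cauchy-sequence bookkeeping and the identification of the Hilbert--Schmidt limit with $T$ via uniqueness of operator-norm limits --- which is a welcome clarification but not a different proof.
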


\begin{lem}\label{HS-TP isomorphism}
Let $\hilbert_1$ and $\hilbert_2$ be separable Hilbert spaces and $\hilone^{\,*}$ the dual of $\hilone.$ The map 
 $F: \hilone^{\,*} \otimes \hiltwo \to \linearhs(\hilone, \hiltwo),$
	\begin{equation}
		F(\bra{\psi} \otimes \ket{\phi}) 
  = \ket{\phi}\bra{\psi},
	\end{equation}
    extended linearly and continuously,
	is a Hilbert space isomorphism $\hilbert^{\,*}_1 \otimes \hilbert_2 \to \linear_{HS}(\hilbert_1, \hilbert_2).$ 
 Similarly, the (conjugate-linear) map 
 $F: \hilone \otimes \hiltwo \to \linearhs(\hilone, \hiltwo),$
	\begin{equation}\label{HS-TP-isomorphism-formula}
		F(\ket{\psi} \otimes \ket{\phi}) 
  = \ket{\phi}\bra{\psi} ,
	\end{equation}
    extended linearly and continuously, isometrically identifies $\hilone\hspace{-0.2mm} \otimes \hspace{-0.2mm}\hilbert_2$ and $ \linear_{HS}(\hspace{-0.3mm}\hilbert_1\hspace{-0.1mm}, \hilbert_2\hspace{-0.2mm}).$
\end{lem}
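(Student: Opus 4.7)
The plan is to verify the lemma by an orthonormal basis argument: define $F$ on simple tensors by the stated formula, check well-definedness and isometry on the algebraic tensor product, then extend continuously and observe that $F$ carries a Hilbert basis to a Hilbert basis, which forces it to be a unitary isomorphism. Fix orthonormal bases $\{\ket{e_i}\} \subseteq \hilone$ and $\{\ket{f_j}\} \subseteq \hiltwo$. In the dual, $\{\bra{e_i}\}$ is an orthonormal basis of $\hilone^{\,*}$ (under the inner product $\langle \bra{\psi},\bra{\phi}\rangle_{\hilone^{\,*}} = \langle \phi,\psi\rangle_{\hilone}$), so $\{\bra{e_i}\otimes\ket{f_j}\}_{i,j}$ is an orthonormal basis of $\hilone^{\,*}\otimes\hiltwo$. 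On the target side, the rank-one operators $\{\ket{f_j}\bra{e_i}\}_{i,j}$ form an orthonormal system in $\linearhs(\hilone,\hiltwo)$ with respect to the Hilbert-Schmidt inner product, since
\begin{equation}
\langle \ket{f_j}\bra{e_i},\, \ket{f_{j'}}\bra{e_{i'}}\rangle_{HS}
= \sum_k \braket{e_k|e_i}\braket{f_j|f_{j'}}\braket{e_{i'}|e_k}
= \delta_{i i'}\delta_{j j'}.
\end{equation}

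Next, I would check isometry on simple tensors. For a simple tensor $\bra{\psi}\otimes\ket{\phi}$ one has $\|\bra{\psi}\otimes\ket{\phi}\|^2 = \|\psi\|^2\|\phi\|^2$, and the Hilbert-Schmidt norm of the rank-one operator $\ket{\phi}\bra{\psi}$ is
\begin{equation}
\|\ket{\phi}\bra{\psi}\|_{HS}^2 = \sum_k \|\,\ket{\phi}\braket{\psi|e_k}\,\|^2 = \|\phi\|^2 \sum_k |\braket{\psi|e_k}|^2 = \|\phi\|^2\|\psi\|^2.
\end{equation}
By bilinearity this shows the map $(\bra{\psi},\ket{\phi})\mapsto\ket{\phi}\bra{\psi}$ is an isometric bilinear form, hence by the universal property of the algebraic tensor product extends to a well-defined linear isometry $F_0$ on $\hilone^{\,*}\otimes_{\mathrm{alg}}\hiltwo$, which in particular sends the orthonormal basis $\bra{e_i}\otimes\ket{f_j}$ to the orthonormal system $\ket{f_j}\bra{e_i}$.

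Since $F_0$ is a linear isometry on a dense subspace, it extends uniquely by continuity to a linear isometry $F$ on the Hilbert space completion $\hilone^{\,*}\otimes\hiltwo$. The main point to check is surjectivity, which is the only step that uses anything beyond basic linear algebra: I would invoke Proposition \ref{SVD-HS} (or equivalently the Parseval expansion of a Hilbert-Schmidt operator) to observe that any $T \in \linearhs(\hilone,\hiltwo)$ admits the expansion $T = \sum_{i,j} T_{ij}\ket{f_j}\bra{e_i}$ with $\sum_{i,j}|T_{ij}|^2 = \|T\|_{HS}^2 < \infty$, so $T$ is the image under $F$ of $\sum_{i,j}T_{ij}\,\bra{e_i}\otimes\ket{f_j}\in \hilone^{\,*}\otimes\hiltwo$. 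Thus $F$ is a surjective linear isometry, i.e.\ a Hilbert space isomorphism.

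For the second statement, the same computation applies, but one replaces the dual pairing with the antilinear identification $\ket{\psi}\mapsto\bra{\psi}$. Since this identification is conjugate-linear and isometric, composing it with $F$ yields a conjugate-linear isometry $\hilone\otimes\hiltwo \to \linearhs(\hilone,\hiltwo)$ given by formula (\ref{HS-TP-isomorphism-formula}), which identifies the two spaces isometrically (but not as $\cc$-linear Hilbert spaces). The main obstacle I anticipate is bookkeeping around the conjugate-linearity in the second half -- one has to be careful that $F$ is linear in $\ket{\phi}$ but conjugate-linear in $\ket{\psi}$, so the extension must be performed on the algebraic tensor product viewed as a vector space over $\cc$ via this mixed-linearity structure, and density plus isometry still gives the continuous extension.
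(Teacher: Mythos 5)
Your proof is correct in substance, but it takes a genuinely different route from the paper: the paper does not prove this lemma directly at all, it simply cites \S2 of Berberian, where the identification is established between $\tptwo$ and the \emph{conjugate-linear} Hilbert--Schmidt operators, and then remarks that both claims follow. You instead give the standard self-contained argument: isometry on simple tensors via $\|\ket{\phi}\bra{\psi}\|_{HS}=\|\phi\|\,\|\psi\|$, orthonormality of the rank-one system $\{\ket{f_j}\bra{e_i}\}$ under the HS inner product, continuous extension of the densely defined isometry, and surjectivity from the $\ell^2$ matrix expansion of an arbitrary HS operator. This buys a proof that is verifiable without consulting the reference, and it makes explicit exactly which facts (Parseval, closedness of the range of an isometry) are being used; the paper's citation buys brevity. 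Two small points. First, to conclude that $F_0$ is isometric on all of the algebraic tensor product (not just on simple tensors) you should note that it preserves the \emph{inner product} of simple tensors, $\braket{\phi|\phi'}\braket{\psi'|\psi}$ on both sides, and then invoke sesquilinearity; norm-preservation on simple tensors alone does not formally suffice for sums. Second, your flagged ``main obstacle'' in the conjugate-linear half is a real one: the formula $\ket{\psi}\otimes\ket{\phi}\mapsto\ket{\phi}\bra{\psi}$ is conjugate-linear in $\psi$ but linear in $\phi$, and such a map does not descend to a well-defined ($\cc$-linear or conjugate-linear) map on $\tptwo$, since $(\lambda\ket{\psi})\otimes\ket{\phi}=\ket{\psi}\otimes(\lambda\ket{\phi})$ would be sent to two different operators. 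The clean fix is to define the map on the orthonormal basis $\ket{e_i}\otimes\ket{f_j}\mapsto\ket{f_j}\bra{e_i}$ and extend, or to work with the conjugate space $\overline{\hilone}$; your phrase ``viewed via this mixed-linearity structure'' does not by itself resolve the ambiguity. This wrinkle is inherited from the paper's own statement (and is harmless where the lemma is applied, since the Schmidt coefficients are real), so it does not invalidate your argument, but it deserves the explicit repair.
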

\begin{proof}
    In §2 of \cite{Berberian2013} the Theorem is proved for conjugate-linear Hilbert-Schmidt operators and the tensor product $\tptwo,$ which is equivalent to our first claim. The second claim follows directly from the first.
\end{proof}

\begin{proof}[Proof of Proposition \ref{SD-Prop}]
By Theorem \ref{HS-TP isomorphism} we can isometrically identify any $\ketpsi \in \tptwo$ with a Hilbert-Schmidt operator $T_\psi \in \linearhshh$ via the mapping $F$ given in (\ref{HS-TP-isomorphism-formula}). As $T_\psi$ is compact, it has an SVD given by 
\begin{equation} \label{SVD in SD proof}
   T_\psi = \sum_{k=1}^{N} \lambda_k \ket{f_k} \bra{e_k},
\end{equation}
where the singular vectors are orthonormal, the singular values tend to zero and $N=\rank(T)\in\nn_0\cup\{\infty\}$.
By Proposition \ref{SVD-HS} the series (\ref{SVD in SD proof}) converges to $T_\psi$ in Hilbert-Schmidt norm.

The inverse map $F^{-1}$ is also isometric and thus continuous, and applying it on both sides of (\ref{SVD in SD proof}) yields the orthonormal series
\begin{equation}
    \ketpsi = \sum_{k=1}^{N} \lambda_k \ket{e_k} \otimes \ket{f_k},
\end{equation}
which converges in $\hilone\otimes\hiltwo.$ This is the desired Schmidt decomposition.
\end{proof}

\section{Integral in Section 4} \label{Appendix B}
In the construction of the MPS in Section 4, we encounter the integral
\begin{equation} \label{scaled-hermite-integral}
    I_{i,j} = \int_{-\infty}^\infty e^{-\frac{1}{2}( x^2 + \Tilde{\omega} x^2 )} \hermite{i}(x)\hermite{j}(\sqrt{\Tilde{\omega}}\,x)dx.
\end{equation}
This can be computed e.g. using generating functions, as demonstrated in the Mathematics Stack Exchange post \cite{hermiteintegral}, and we will use their method to obtain a formula for the integral (\ref{scaled-hermite-integral}). To this end, we can write an exponential generating function $I(s,t)$ of the integral $I_{i,j}$ as
\begin{align}
    I(s,t) &= \sum_{i=0}^\infty\sum_{j=0}^\infty I_{i,j} \frac{s^i}{i!}\frac{t^j}{j!} \nonumber \\
    &= \int_{-\infty}^\infty e^{-\frac{1}{2}( x^2 + \Tilde{\omega} x^2 )} 
    \left( \sum_{i=0}^\infty \hermite{i}(x) \frac{s^i}{i!} \right)
    \left( \sum_{j=0}^\infty \hermite{j}\left( \sqrt{\Tilde{\omega}} \, x \right) \frac{t^j}{j!} \right) dx.
\end{align}
Using the standard generating function of Hermite polynomials, we have $\sum_{i=0}^\infty \hermite{i}(x) \frac{s^i}{i!} = e^{2xs-s^2}$ and $\sum_{j=0}^\infty \hermite{j}\left( \sqrt{\Tilde{\omega}} \, x \right) \frac{t^j}{j!} = e^{2x\sqrt{\Tilde{\omega}}\,t-t^2},$ and therefore
\begin{align}
     I(s,t) 
     &= e^{-s^2-t^2} \int_{-\infty}^\infty
     e^{-\frac{1}{2}( x^2 + \Tilde{\omega} x^2 ) + 2x(s+\sqrt{\Tilde{\omega}}\,t)} dx \nonumber \\
     &= \sqrt{\frac{2\pi}{1+\Tilde{\omega}}} \exp\left( \frac{2(\sqrt{\Tilde{\omega}}t + s)^2}{1+\Tilde{\omega}} - s^2 -t^2 \right).
     \label{integral-gen-func}
\end{align}
Now the integral for any $i,j\in\nn_0$ is given by the derivative $I_{i,j} = \partial_s^i \partial_t^j I(s,t) \big{|}_{s,t=0}.$ Writing the exponential (\ref{integral-gen-func}) as a Maclaurin series and applying the multinomial formula as well as the fact $\partial_x^i x^j = \delta_{i,j}i!$ yields the expression
\begin{equation}
    I_{i,j} = \sqrt{\frac{2\pi}{1+\omegatilde}} \sum_{k=0}^\infty \sum_{p+q+r=k} 
    \frac{1}{p!q!r!}
    \frac{(-1)^p(1-\omegatilde)^{p+q}(4\sqrt{\omegatilde})^r}
    {(1+\omegatilde)^{p+q+r}} \delta_{i,2q+r}\delta_{j,2p+r}.
\end{equation}
We see in particular that if $i$ and $j$ have different parity, then $I_{i,j}=0.$

\bibliographystyle{unsrt}

\end{document}